\documentclass[lettersize, journal]{IEEEtran}
\usepackage{float}
\usepackage{caption}

\usepackage{color}

\usepackage{cite}
\usepackage{amsmath,amssymb,amsfonts}
\usepackage{amsthm}
\usepackage{graphicx}
\usepackage{textcomp}
\usepackage{xcolor}
\usepackage{svg}
\usepackage{subcaption}

\usepackage{mathtools}

\usepackage[ruled,vlined]{algorithm2e}

\usepackage{epstopdf}
\usepackage{multirow}
\usepackage{pifont}

\usepackage{stfloats}

\newtheorem{remark}{Remark}
\newtheorem{theorem}{Theorem}

\newtheorem{lemma}{Lemma}

\newtheorem{corollary}{Corollary}

\allowdisplaybreaks
\makeatletter
\def\ScaleIfNeeded{%
\ifdim\Gin@nat@width>\linewidth \linewidth \else \Gin@nat@width
\fi } \makeatother

\graphicspath{{Fig/}}

\begin{document}

\title{Pinching-Antenna Systems-enabled Secure ISAC: A Two-Timescale Optimization Framework}
\author{Haowen Song, Jingjing Zhao,~\IEEEmembership{Senior Member,~IEEE}, Xidong Mu, and Kaiquan Cai 
\thanks{H. Song, J. Zhao, and K. Cai are with the School of Electronics and Information Engineering, Beihang University, 100191, Beijing, China, and also with the State Key Laboratory of CNS/ATM, 100191, Beijing, China. (e-mail:\{haowensong, jingjingzhao, caikq\}@buaa.edu.cn). 

X. Mu is with the Centre for Wireless Innovation (CWI), Queen's University Belfast, Belfast, BT3 9DT, U.K. (e-mail: x.mu@qub.ac.uk). }}
\maketitle
\begin{abstract}
A novel two-timescale optimization framework is proposed for pinching-antenna systems (PASS)-enabled secure integrated sensing and communications (ISAC). Specifically, a base station (BS) equipped with pinching antennas (PAs) transmits signals to a legitimate user under the existence of an eavesdropper (Eve), while employing leaky coaxial cables (LCXs) for receiving echo signals to track Eve’s mobility
states, i.e., locations and velocities. Considering the practical PAs activation overhead, the pinching beamforming and baseband processing are optimized in the large and small timescales, respectively. The multiple-waveguide scenario is first considered, where the BS can transmit the artificial noise together with communication signals for both jamming and sensing purposes. A joint baseband and pinching beamforming design problem is formulated to maximize the average secrecy rate. To address this problem, an alternating optimization algorithm is first invoked for jointly optimizing the pinching and baseband beamforming with predicted Eve’s mobility states. With determined PAs positions, the baseband beamforming is updated with refined Eve’s states obtained from real-time echo signal processing. The single-waveguide scenario is then considered. Since a single waveguide carries at most one independent data stream, an ISAC framework with separate communication and sensing phases is proposed. The element-wise algorithm proposed for the multiple-waveguide scenario is extended to solve the resultant pinching beamforming problem. Numerical results demonstrate that: 1) Eve’s velocities and positions can be accurately tracked with the proposed two-timescale framework in both multiple- and single-waveguide scenarios; and 2) PASS achieves superior secrecy rate compared to conventional multiple-antenna benchmarks.

\end{abstract}
\begin{IEEEkeywords}
Pinching-antenna systems, pinching beamforming, physical layer security, integrated sensing and communications.
\end{IEEEkeywords}
\section{Introduction}
Wireless communication systems have witnessed rapid advancements in recent years, while the performance requirements imposed on them have become increasingly diverse and demanding. As a consequence, next-generation communication technologies are expected to support not only high-rate data transmission but also additional capabilities such as environmental sensing. Such communication and sensing capabilities are particularly valuable for Internet of Things applications, such as intelligent transportation and industrial automation, which require reliable connectivity and real-time environmental awareness~\cite{9606831}. Driven by these requirements, the sensing capability of wireless signals has attracted growing attention from both academia and industry, paving the way for the development of integrated sensing and communication (ISAC) networks~\cite{9737357}. ISAC enables the integration of communication and sensing functionalities over shared wireless resources, thereby enhancing spectrum utilization and improving overall system efficiency~\cite{9755276}. 

Beyond resource sharing, ISAC also provides a new opportunity to endow wireless networks with environmental awareness. In communication-centric scenarios, such awareness can be exploited to facilitate communication-oriented designs, rather than treating sensing as an independent objective. This gives rise to the sensing-assisted communication paradigm, where sensing serves as an auxiliary function to support beamforming, resource allocation, and secure transmission.  Specifically, this paradigm leverages sensing-derived environmental knowledge, such as location, velocity, angle, and channel-related parameters, to guide subsequent communication design. It has been widely investigated for communication tasks such as beam training~\cite{10609801}, beam tracking~\cite{9171304}, and channel estimation~\cite{10845870}. Moreover, this capability is particularly valuable in physical layer security (PLS) systems, where obtaining the channel state information (CSI) of eavesdroppers (Eves) at the transmitter is generally difficult because Eves are typically non-cooperative nodes~\cite{9014513}. Motivated by this, sensing-aided communication has recently emerged as a promising approach for acquiring wiretap CSI and thereby facilitating subsequent beamforming design~\cite{wei2023integrated, 9829746}. Compared with conventional pilot-based estimation methods, this approach eliminates the requirement for transmitting pilot signals, and instead estimating the CSI based on the parameters extracted from echo signals reflected by Eves~\cite{10418473}.

Recently, PASS has emerged as a novel flexible-antenna architecture, whose first prototype was demonstrated by NTT DOCOMO~\cite{docomo}. PASS employs dielectric waveguides with low in-waveguide propagation loss as the transmission medium, and its aperture length spans from a few meters to tens of meters. Along the waveguide, small dielectric elements, referred to as pinching antennas (PAs), can be dynamically deployed to transmit radio waves~\cite{yue2026performance}. This scalable architecture not only supports a flexible line-of-sight (LoS) link establishment but also significantly reduces the path loss, thereby creating new opportunities for enhancing secure communications~\cite{10945421, 11169486, 11215676}. These unique characteristics also make PASS particularly appealing for sensing-aided secure communications. From a communication perspective, PASS establishes “near-wired” links with stable LoS conditions to legitimate users, which effectively mitigates free-space path loss and avoids LoS blockage~\cite{11342384}. From a sensing perspective, the long waveguide synthesizes a large effective aperture, which induces dominant near-field effects that facilitate precise localization~\cite{10934790}. Motivated by these advantages, the sensing-assisted secure communications in PASS deserves further investigations.

\subsection{Related Works}
In recent years, growing research interests have been devoted to sensing-aided secure communication. For instance, the authors of~\cite{9968163} investigated PLS in an ISAC-enabled radar-communication coexistence system and proposed joint transmit beamforming designs to enhance secrecy performance when the Eves’ CSI is unavailable. In~\cite{10153696}, the authors studied robust beamforming optimization for downlink secure ISAC systems, where secure communication and sensing performance were well balanced by jointly designing information and sensing beams under imperfect CSI of potential Eves. In this work, artificial noise (AN) employed at the base station serves a dual role: degrading the reception quality of Eves and simultaneously serving as a probing waveform for sensing. By reusing AN for both jamming and sensing, the sensing requirement can be fulfilled while preserving the secrecy performance. Moreover, the authors of~\cite{9199556} investigated PLS in a MIMO dual-functional radar-communication (DFRC) system and proposed AN-aided beamforming designs to suppress eavesdropping at radar targets while guaranteeing the signal-to-interference-plus-noise ratio (SINR) requirements of legitimate users. As a further advance, in~\cite{10227884}, a hybrid Capon and approximate maximum likelihood approach was adopted to infer the directions of potential Eves, and a weighted optimization framework was developed to jointly enhance the secrecy rate and reduce the Eve's Cramér-Rao bound (CRB). Furthermore, sensing-assisted predictive beamforming has attracted considerable attention in high-mobility scenarios. In~\cite{10345500}, a secure unmanned aerial vehicle (UAV) communication framework was proposed against a mobile Eve, where AN was exploited to sense the Eve and estimate the wiretap channel, so as to assist the online UAV navigation and resource allocation design.

Recently, PASS-enabled secure communications have also attracted growing research attentions. In~\cite{sun2025physical}, the authors proposed gradient-based and fractional programming algorithms for secure beamforming in PASS, where the baseband and pinching beamforming were jointly optimized to maximize the secrecy rate in both single- and multi-user scenarios. Moreover, a coalitional game-based pinching beamforming algorithm was developed in~\cite{11215679}, where PAs positions were designed to enhance the legitimate user’s signal quality through amplitude control of the legitimate user’s effective channel, while degrading the Eve’s reception via phase alignment. Furthermore, the authors of~\cite{11202497} proposed a PASS-enabled secure communication framework, where a low-complexity PA-wise successive tuning algorithm was developed for the single-waveguide scenario, and both waveguide division (WD) and waveguide multiplexing (WM) structures were studied in the multiple-waveguide scenario to improve secrecy performance. However, the aforementioned works~\cite{sun2025physical,11215679,11202497} all assumed that the wiretap CSI is well known at the transmitter.

\subsection{Motivations and Contributions}
Existing studies on PASS-enabled secure communications assume that the wiretap CSI is known and that Eve remains stationary, which limits their practical applicability to dynamic secure communication scenarios. To address the above issues, in this paper, we leverage the ISAC functionality of PASS to track the CSI of a moving malicious Eve. It is worthy noting that, due to the non-negligible latency and overhead introduced by PAs activation, we need to consider the practical limitation that the pinching beamforming cannot adapt to Eve’s mobility in real time~\cite{11364174}. To this end, we propose a novel two-timescale joint pinching beamforming and baseband processing framework for the PASS-enabled secure ISAC. Specifically, the pinching beamforming and baseband processing are performed in the in the large timescale (i.e., a time block containing multiple coherent processing intervals (CPIs)), and the small timescale (i.e., a CPI), respectively. The main contributions of this
work are summarized as follows.

\begin{itemize}
    \item We consider the scenario of sensing-aided secure communications empowered by PASS, where a base station (BS) is deployed with PAs for signal transmissions to a legitimate user, under the existence of a moving eavesdropper (Eve). To track the Eve's wiretap CSI, the BS adopts leaky coaxial cables (LCXs) for estimating Eve's velocities and positions with echo signals, by leveraging the near-field characteristics given the extended receiving aperture. Considering the overhead associated with PAs positions reconfiguration, we propose to optimize the pinching beamforming for a predefined time period, (i.e., a time block), while allowing baseband processing to be executed in real time, thereby facilitating a two-timescale optimization framework.
    \item We first consider the multiple-waveguide scenario, where the BS is enabled to transmit the AN together with communication signals, for both sensing and jamming purposes. We formulate the average secrecy rate maximization problem in each time block, i.e., the time duration in which PAs positions keep unchanged, subject to the sensing performance constraint. To solve the resultant optimization problem across two different timescales, the alternating optimization (AO) algorithm is first invoked for jointly solving the pinching and baseband beamforming problem with the predicted Eve’s mobility states, which is followed by the real-time refinement of the baseband beamforming with the continuously-updated Eve’s states. 
    \item We then consider the single-waveguide scenario. Since a single waveguide supports at most one independent data stream, we design the ISAC frame structure consisting of separate communication and sensing phases in each time block. Accordingly, the joint pinching beamforming and power allocation problem is formulated for maximizing the average secrecy rate in the communication phase, under the constraint of minimum echo signal power in the sensing phase. The large-timescale pinching beamforming is optimized in an element-wise manner, while the small-timescale power allocation is adaptively determined according to the effective channel quality comparison between Bob and Eve. 
    \item Numerical results unveil that: 1) In both multiple- and single-waveguide scenarios, Eve's velocities and positions can be tracked with high precision, by employing the proposed two-timescale optimization framework; and 2) Under the PAs activation overhead limitation, PASS still outperforms both conventional MIMO and massive MIMO schemes in terms of secrecy rate with the assistance of ISAC. 

\end{itemize}

\subsection{Organization and Notations}
The rest of this paper is structured as follows. Section II studies the two-timescale ISAC-enabled secure communication system model for multiple-waveguide. Section III proposes the Eve's trajectory tracking method, followed by the joint pinching and baseband beamforming problem formulation. Section IV proposes the two-timescale optimization algorithm for the multiple-waveguide scenario. Section V studies the single-waveguide scenario and develops the corresponding joint pinching beamforming and power allocation designs. Section VI presents numerical results and Section VII concludes the paper.

$\textit {Notations}$: Italic letters, bold-face lower-case letters, and bold-face upper-case letters denote scalars, vectors, and matrices, respectively. $\mathbb{C}^{N\times M}$ denotes the set of $N\times M$ complex-valued matrices. Superscripts $(\cdot)^T$, $(\cdot)^H$, and $(\cdot)^{-1}$ denote the transpose, conjugate transpose, and inverse, respectively. $|\cdot|$ and $\left\|\cdot\right\|$ denote the modulus of a scalar and the Euclidean norm of a vector, respectively. $\text{Tr}\left(\cdot\right)$ denotes the trace of a matrix. Moreover, $\odot$ and $\left\langle\cdot,\cdot\right\rangle$ denote the Hadamard product and inner product, respectively. Unless otherwise specified, all random variables are assumed to be zero mean.

\section{Two-Timescale PASS-enabled Secure ISAC with Multiple Waveguides}
As illustrated in Fig.~1(a), we consider a sensing-aided secure communication system consisting of a BS, a single-antenna legitimate user, referred to as Bob, and a single-antenna Eve. Due to the Eve’s malicious nature, it is challenging for the BS to obtain the wiretap CSI. To address this issue, ISAC technology is utilized to enable reliable communication with Bob while simultaneously leveraging its sensing capability to obtain the Eve’s CSI. The BS is deployed with $N_{\text{t}}$ dielectric waveguides, each of which is fed by a dedicated RF chain and contains $M_{\text{t}}$ PAs. Let $\mathcal{N}_{\text{t}} = \left\{1,\dots, N_{\text{t}}\right\}$ and $\mathcal{M}_{\text{t}}^n=\left\{1,\dots, M_{\text{t}}\right\}$ denote the set of waveguides and that of PAs on the $n$-th waveguide, respectively. Moreover, $N_{\text{r}}$ leaky coaxial cables (LCXs) are employed at the BS for echo reception~\cite{11111701}, each containing $M_{\text{r}}$ uniformly spaced slots for receiving echo signals. Both the transmit waveguides and receive LCXs are assumed to be aligned parallel to the $x$-axis at a height of $d$. Furthermore, the lengths of both the waveguides and the LCXs are set to $L$. Let $\bar{\boldsymbol{\psi}}^{n}_{\text{p}}=[0,y_{\text{p}}^{n},d]$ and $\boldsymbol{\psi}_{\text{p}}^{n,m} = \left[x_{\text{p}}^{n,m}, y_{\text{p}}^{n}, d\right]$ denote the positions of the feed point and the $m$-th PA on the $n$-th waveguide, respectively. The $x$-axis coordinates of PAs deployed on the $n$-th waveguide are collected in $\mathbf{x}_{\text{p}}^{n} = \left[x_{\text{p}}^{n,1}, ..., x_{\text{p}}^{n,M_{\text{t}}}\right]\in\mathbb{R}^{1\times M_{\text{t}}}$. Accordingly, the PAs positions matrix over all waveguides is defined as $\mathbf{X}_{\text{p}} = \left[\left(\mathbf{x}_{\text{p}}^{1}\right)^T, ..., \left(\mathbf{x}_{\text{p}}^{N_{\text{t}}}\right)^T\right]^T\in\mathbb{R}^{N_{\text{t}}\times M_{\text{t}}}$. Assume that PAs on each waveguide are placed in a successive order, i.e., $x_{\text{p}}^{n,{m+1}}>x_{\text{p}}^{n,{m}}, \forall 1\leq m< M_{\text{t}}, \forall n$, and the maximum deployment range of PAs is $L$. We assume that the Eve moves on the $xy$-plane, while Bob remains stationary. Time is slotted into coherent processing intervals (CPIs) indexed by $t$ with the duration of $\Delta T$, during which the Eve’s location and velocity are assumed to be unchanged. Denote the location of Bob, the location and velocity of Eve in the $t$-th CPI by $\boldsymbol{\psi}_{\text{b}} = \left[x_{\text{b}}, y_{\text{b}}, 0\right]$, $\boldsymbol{\psi}_{\text{e}}^{t} = \left[x_{\text{e}}^{t}, y_{\text{e}}^{t}, 0\right]$ and $\mathbf{v}_{\mathrm{e}}^{t}=[v_{x}^{t},v_{y}^{t},0]^{T}$, respectively.  For improving the sensing and secrecy performance, the BS inserts AN into the transmit signals for both sensing and jamming purpose. Then, the transmit signal at the BS in the $t$-th CPI can be expressed as
\begin{align}
\mathbf{c}^{t}=\mathbf{w}^{t}s+\mathbf{z}^{t},
\end{align}
where $s\in\mathbb{C}$ denotes the information signal for Bob with $\mathbb{E}\{|s|^2\}=1$, $\mathbf{w}^{t}\in\mathbb{C}^{N_{\text{t}}\times1}$ denotes the baseband beamforming vector and $\mathbf{z}^{t}\in\mathbb{C}^{N_{\text{t}}\times1}$ denotes the AN vector. 

\begin{figure}
    \centering
    \begin{subfigure}{\linewidth}
        \centering
        \includegraphics[scale=0.3]{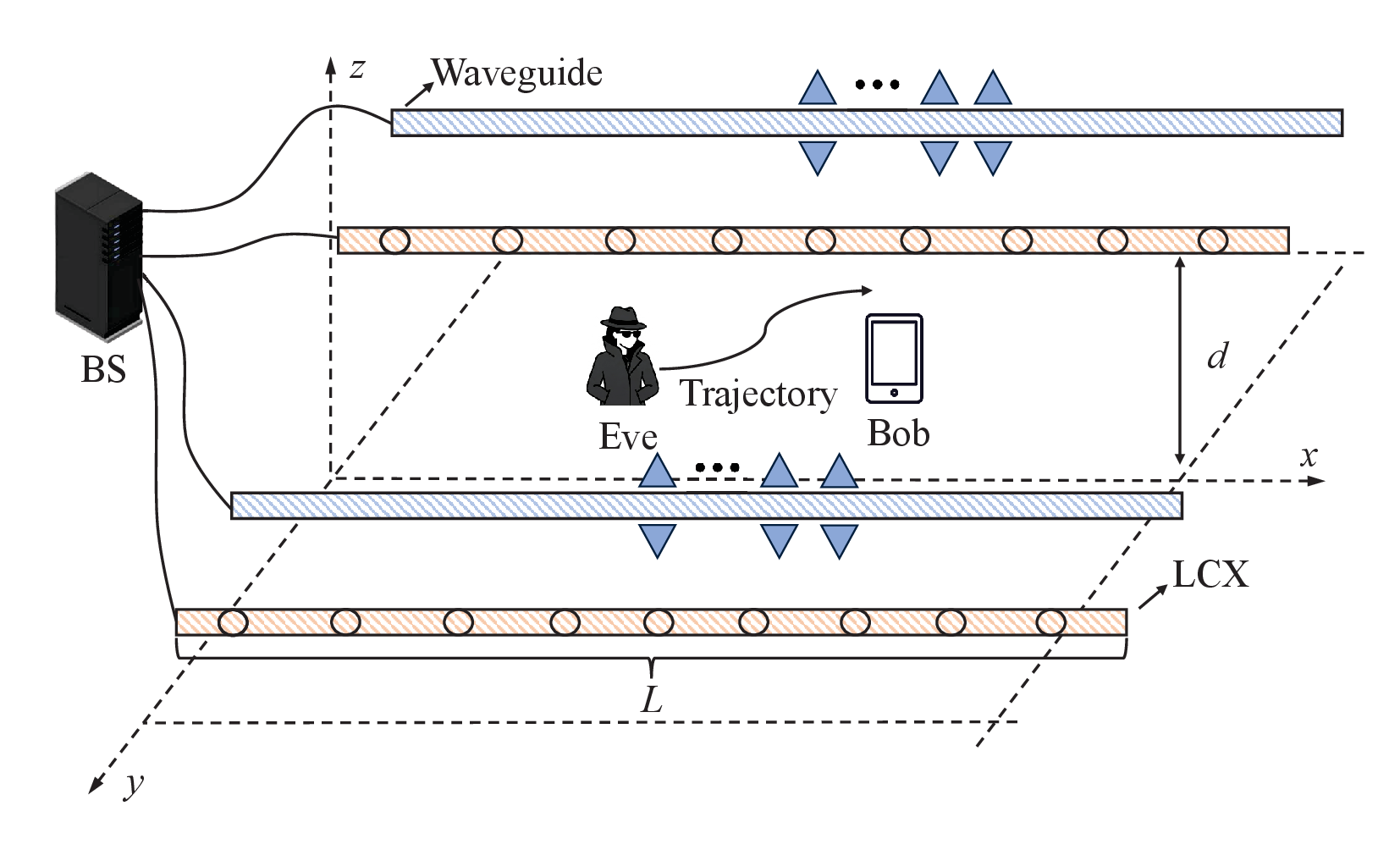}
        \caption{}
        \label{fig:system_model}
    \end{subfigure}
    \begin{subfigure}{\linewidth}
        \centering
        \includegraphics[scale=0.3]{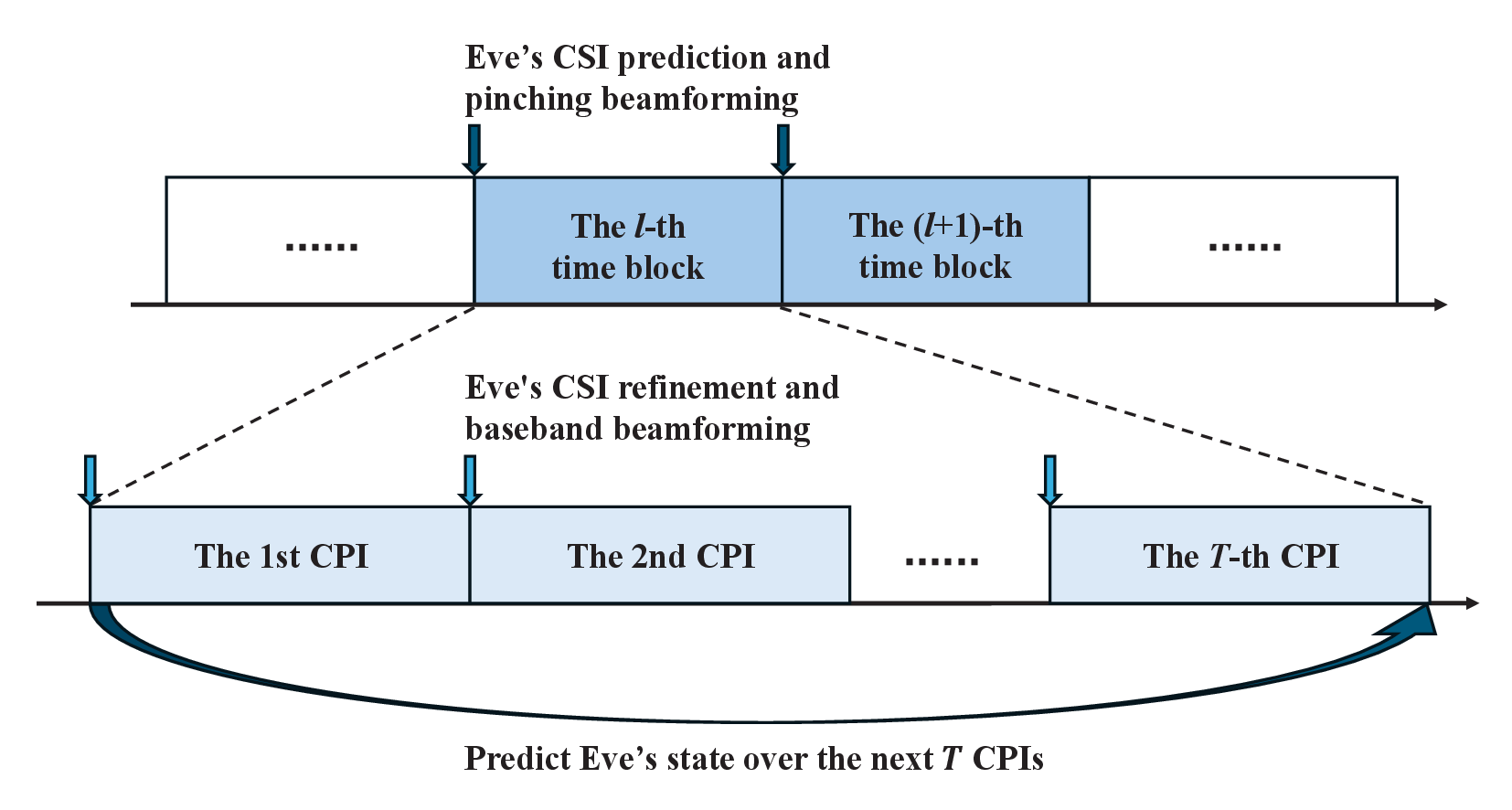}
        \caption{}
        \label{fig:frame}
    \end{subfigure}
    \caption{Illustration of (a) the system model of the ISAC-enabled secure communications in multiple-waveguide PASS; and (b) two-timescale pinching and baseband beamforming framework.}
    \label{fig:system_model_frame}
\end{figure}


\begin{remark}
Due to the latency and overhead caused by PAs activation, PAs positions cannot be adjusted in real time to adapt to Eve’s mobility. To address this issue, we propose a two-timescale beamforming framework, where the pinching and baseband beamforming are performed over the large and small timescale, respectively. As illustrated in Fig. 1(b), on the one hand, PAs positions are updated at the beginning of each time block consisting of $T$ CPIs with the predicted Eve's CSI. On the other hand, the baseband beamforming is updated at each CPI to adapt to channel variations.
\end{remark}


\subsection{Sensing Signal Model}
The round-trip channel between the BS and Eve in the $t$-th CPI includes both the forward probing and the reflected channels. Since the pinching beamforming is optimized over the large timescale and remains unchanged within the time block, the time index $t$ of $\mathbf{X}_{\text{p}}$ is omitted for notational simplicity. The signal propagation response $\mathbf{g}\left(\mathbf{x}_{\text{p}}^{n}\right)\in\mathbb{C}^{M_{\text{t}}\times 1}$ from the feed point to PAs over the $n$-th waveguide is given by
\begin{align}
\label{eq:multi-waveguide-channel}
    \mathbf{g}\left(\mathbf{x}_{\text{p}}^{n}\right) & = \frac{1}{\sqrt{M_{\text{t}}}}\left[e^{-j\frac{2\pi \left\|\boldsymbol{\psi}_{\text{p}}^{n,1}-\bar{\boldsymbol{\psi}}_{\text{p}}^{n}\right\|}{\lambda_{\text{g}}}},..., e^{-j\frac{2\pi \left\|\boldsymbol{\psi}_{\text{p}}^{n,M_{\text{t}}}-\bar{\boldsymbol{\psi}}_{\text{p}}^{n}\right\|}{\lambda_{\text{g}}}}\right]^T\nonumber\\
    &=\frac{1}{\sqrt{M_{\text{t}}}}\left[e^{-j\frac{2\pi x_{\text{p}}^{n,1}}{\lambda_{\text{g}}}},..., e^{-j\frac{2\pi x_{\text{p}}^{n,M_{\text{t}}}}{\lambda_{\text{g}}}}\right]^T,
\end{align}
where $\lambda_{\text{g}}=\frac{\lambda}{n_{\text{eff}}}$ is the guided wavelength, with $\lambda$ and $n_{\text{eff}}$ being the free-space wavelength and the effective refractive index of the dielectric waveguide, respectively. Following~\cite{11202577}, the transmit power assigned to each waveguide is assumed to be uniformly radiated by its $M_{\text{t}}$ PAs, and thus the normalization factor $\frac{1}{\sqrt{M_{\text{t}}}}$ is included in the propagation response. In addition, the in-waveguide propagation loss is neglected in Eq.~\eqref{eq:multi-waveguide-channel}, since it is much smaller than the free-space path loss, as verified in~\cite{kaidi}. Then, the array response vector between PAs on the $n$-th waveguide and Eve at the beginning of the $t$-th CPI is given by
\begin{align}
    & \boldsymbol{\alpha}_{\text{e,t}}^{n,t}\left(\boldsymbol{\psi}_{\text{e}}^{t},\mathbf{x}_{\text{p}}^{n}\right)= \nonumber\\
    &\left[\frac{\eta e^{-j\frac{2\pi}{\lambda}\left\|\boldsymbol{\psi}_{\text{e}}^{t}-\boldsymbol{\psi}^{n,1}_{\text{p}}\right\|}}{\left\|\boldsymbol{\psi}_{\text{e}}^{t}-\boldsymbol{\psi}^{n,1}_{\text{p}}\right\|}, ..., \frac{\eta e^{-j\frac{2\pi}{\lambda}\left\|\boldsymbol{\psi}_{\text{e}}^{t}-\boldsymbol{\psi}^{n,M_{\text{t}}}_{\text{p}}\right\|}}{\left\|\boldsymbol{\psi}_{\text{e}}^{t}-\boldsymbol{\psi}^{n,M_{\text{t}}}_{\text{p}}\right\|}\right],
    \label{eq:multi-h-channel-Eve-pro}
\end{align}
where $\eta=\frac{\lambda}{4\pi}$ represents the channel gain at the reference distance of $1$~m, and $\left\|\boldsymbol{\psi}_{\text{e}}^{t}-\boldsymbol{\psi}^{n,m}_{\text{p}}\right\|$ is given by
\begin{align}
\left\|\boldsymbol{\psi}_{\text{e}}^{t}-\boldsymbol{\psi}^{n,m}_{\text{p}}\right\|=\sqrt{\left(x_\text{e}^{t}-x_\mathrm{p}^{n,m}\right)^2+\left(y_\text{e}^{t}-y_\mathrm{p}^{n}\right)^2+d^2}.    
\end{align}

Owing to the near-field effects induced by the large transmitting/receiving apertures, the Doppler shifts become non-uniform across all PAs and LCX slots. Specifically, we first define the vector from the $m$-th PA on the $n$-th waveguide to Eve in the $t$-th CPI as
\begin{align}
\mathbf{r}^{n,m,t}_{\text{t}}\triangleq\boldsymbol{\psi}_{\text{e}}^{t}-\boldsymbol{\psi}^{n,m}_{\text{p}}.
\end{align}
Then, the direction of $\mathbf{r}^{n,m,t}_{\text{t}}$ can be given by $\hat{\mathbf{r}}^{n,m,t}_{\text{t}}=\mathbf{r}^{n,m,t}_{\text{t}}/\|\mathbf{r}^{n,m,t}_{\text{t}}\|_2$. Accordingly, the projected velocity of Eve onto the direction $\hat{\mathbf{r}}^{n,m,t}_{\text{t}}$ is given by
\begin{align}
v^{n,m,t}_{\text{t}}=\left\langle\hat{\mathbf{r}}^{n,m,t}_{\text{t}},\mathbf{v}_{\mathrm{e}}^{t}\right\rangle=\hat{\mathbf{r}}^{n,m,t}_{\text{t}}\mathbf{v}_{\mathrm{e}}^{t},
\end{align}
which is the Eve's radial velocity relative to the $m$-th PA on the $n$-th waveguide. Thus, considering all PAs on the $n$-th waveguide, the time-variant phase-Doppler vector $\mathbf{d}_{\text{t}}^{n,t}\in\mathbb{C}^{1\times M_{\text{t}}}$ is given by
\begin{align}
\mathbf{d}_{\text{t}}^{n,t}\left(\mathbf{v}_{\mathrm{e}}^{t},\boldsymbol{\psi}_{\text{e}}^{t},\mathbf{x}_{\text{p}}^{n}\right)=\left[e^{-\mathrm{j}2\pi f^{n,1,t}_{\text{t}}\Delta T},\ldots,e^{-\mathrm{j}2\pi f^{n,M_{\text{t}},t}_{\text{t}}\Delta T}\right],
\end{align}
where $f^{n,m,t}_{\text{t}}=\frac{v^{n,m,t}_{\text{t}}}{\lambda}$ denotes the Doppler frequency shift over the $m$-th antenna on the $n$-th waveguide in the $t$-th CPI. As such, the wireless channel vector from PAs on the $n$-th waveguide to Eve is given by
\begin{align}
\label{eq:multi-overall-channel-pro}
\mathbf{h}_{\text{e,t}}^{n,t}\left(\mathbf{v}_{\mathrm{e}}^{t},\boldsymbol{\psi}_{\text{e}}^{t},\mathbf{x}_{\text{p}}\right)\triangleq\boldsymbol{\alpha}_{\text{e,t}}^{n,t}\left(\boldsymbol{\psi}_{\text{e}}^{t},\mathbf{x}_{\text{p}}\right)\odot\mathbf{d}_{\text{t}}^{n,t}\left(\mathbf{v}_{\mathrm{e}}^{t},\boldsymbol{\psi}_{\text{e}}^{t},\mathbf{x}_{\text{p}}\right).    
\end{align}
Further denote $\mathbf{h}_{\text{e,t}}^{t}\left(\mathbf{v}_{\mathrm{e}}^{t},\boldsymbol{\psi}_{\text{e}}^{t},\mathbf{X}_{\text{p}}\right)=\left[\mathbf{h}_{\text{e,t}}^{1,t}\left(\mathbf{v}_{\mathrm{e}}^{t},\boldsymbol{\psi}_{\text{e}}^{t},\mathbf{x}_{\mathrm{p}}^{1}\right),\ldots,\mathbf{h}_{\text{e,t}}^{N_{\text{t}},t}\left(\mathbf{v}_{\mathrm{e}}^{t},\boldsymbol{\psi}_{\text{e}}^{t},\mathbf{x}_{\mathrm{p}}^{N_{\text{t}}}\right)\right]\in\mathbb{C}^{1\times N_{\text{t}}M_{\text{t}}}$ as the channel vector from PAs over all waveguides to Eve. Therefore, the
overall probing signal arriving at Eve in the $t$-th CPI is given by
\begin{align}
\label{eq:multi-signal-Eve}
y_\text{e}^{t}=\mathbf{h}_{\text{e,t}}^{t}\left(\mathbf{v}_{\mathrm{e}}^{t},\boldsymbol{\psi}_{\text{e}}^{t},\mathbf{X}_{\text{p}}\right)\mathbf{G}\left(\mathbf{X}_{\text{p}}\right)\mathbf{c}^{t}+n_\mathrm{e}, 
\end{align}
where $\mathbf{G}\left(\mathbf{X}_{\text{p}}\right)\in\mathbb{C}^{N_{\text{t}}M_{\text{t}}\times N_{\text{t}}}$ denotes the propagation response from feed points to PAs over all waveguides, given by
\begin{align}
\mathbf{G}\left(\mathbf{X}_{\text{p}}\right)=
\begin{bmatrix}
\mathbf{g}\left(\mathbf{x}_\mathrm{p}^1\right) & \mathbf{0} & \ldots & \mathbf{0} \\
\mathbf{0} & \mathbf{g}\left(\mathbf{x}_\mathrm{p}^2\right) & \ldots & \mathbf{0} \\
\vdots & \vdots & \ddots & \vdots \\
\mathbf{0} & \mathbf{0} & \ldots & \mathbf{g}\left(\mathbf{x}_\mathrm{p}^{N_{\text{t}}}\right)
\end{bmatrix},
\end{align}
and $n_{\mathrm{e}}\sim\mathcal{CN}(0,\sigma_{\mathrm{e}}^{2})$ denotes the additive white Gaussian noise (AWGN) at Eve, with zero mean and a power of $\sigma_{\mathrm{e}}^{2}$. 

The probing signal will be reflected at Eve and captured by LCXs for sensing. Denote the position of the $m$-th slot on the $q$-th LCX by $\boldsymbol{\psi}_{\text{c}}^{q,m} = \left[x_{\text{c}}^{q,m}, y_{\text{c}}^{q}, d\right]$. The vector from $\boldsymbol{\psi}_{\text{e}}^{t}$ to $\boldsymbol{\psi}_{\text{c}}^{q,m}$ is given by
\begin{align}
\mathbf{r}^{q,m,t}_{\text{r}}\triangleq\boldsymbol{\psi}_{\text{e}}^{t}-\boldsymbol{\psi}^{q,m}_{\text{c}},
\end{align}
whose direction can be given by $\hat{\mathbf{r}}^{q,m,t}_{\text{r}}=\mathbf{r}^{q,m,t}_{\text{r}}/\|\mathbf{r}^{q,m,t}_{\text{r}}\|_2$. Thus, the Eve's radial velocity with respect
to the $m$-th slot on the $q$-th LCX is given by
\begin{align}
v^{q,m,t}_{\text{r}}=\left\langle\hat{\mathbf{r}}^{q,m,t}_{\text{r}},\mathbf{v}_{\mathrm{e}}^{t}\right\rangle=\hat{\mathbf{r}}^{q,m,t}_{\text{r}}\mathbf{v}_{\mathrm{e}}^{t}.
\end{align}
Considering all slots on the $q$-th LCX, the time-variant phase-Doppler vector $\mathbf{d}_{\text{r}}^{q,t}\in\mathbb{C}^{1\times M_{\text{r}}}$ is given by
\begin{align}
\mathbf{d}_{\text{r}}^{q,t}\left(\mathbf{v}_{\mathrm{e}}^{t},\boldsymbol{\psi}_{\text{e}}^{t}\right)=\left[e^{-\mathrm{j}2\pi f^{q,1,t}_{\text{r}}\Delta T},\ldots,e^{-\mathrm{j}2\pi f^{q,M_{\text{r}},t}_{\text{r}}\Delta T}\right],
\end{align}
where $f^{q,m,t}_{\text{r}}=\frac{v^{q,m,t}_{\text{r}}}{\lambda}$ denotes the Doppler frequency shift over the $m$-th slot on the $q$-th LCX in the $t$-th CPI. Additionally, the wireless channel vector of slots on the $q$-th LCX at the beginning of the $t$-th CPI is given by
\begin{align}
    & \boldsymbol{\alpha}_{\text{e,r}}^{q,t}\left(\boldsymbol{\psi}_{\text{e}}^{t}\right)= \left[\frac{\eta e^{-j\frac{2\pi}{\lambda}\left\|\boldsymbol{\psi}_{\text{e}}^{t}-\boldsymbol{\psi}^{q,1}_{\text{c}}\right\|}}{\left\|\boldsymbol{\psi}_{\text{e}}^{t}-\boldsymbol{\psi}^{q,1}_{\text{c}}\right\|}, ..., \frac{\eta e^{-j\frac{2\pi}{\lambda}\left\|\boldsymbol{\psi}_{\text{e}}^{t}-\boldsymbol{\psi}^{q,M_{\text{r}}}_{\text{c}}\right\|}}{\left\|\boldsymbol{\psi}_{\text{e}}^{t}-\boldsymbol{\psi}^{q,M_{\text{r}}}_{\text{c}}\right\|}\right].
    \label{eq:single-h-channel-Eve-eco}
\end{align}
Similar to Eq.~\eqref{eq:multi-overall-channel-pro}, the wireless channel vector from Eve to the $q$-th LCX is given by 
\begin{align}
\label{eq:multi-overall-channel-eco}
\mathbf{h}_{\text{e,r}}^{q,t}\left(\mathbf{v}_{\mathrm{e}}^{t},\boldsymbol{\psi}_{\text{e}}^{t}\right)\triangleq\boldsymbol{\alpha}_{\text{e,r}}^{q,t}\left(\boldsymbol{\psi}_{\text{e}}^{t}\right)\odot\mathbf{d}_{\text{r}}^{q,t}\left(\mathbf{v}_{\mathrm{e}}^{t},\boldsymbol{\psi}_{\text{e}}^{t}\right).    
\end{align}
Thus, the received echo signal on the $q$-th LCX in the $t$-th CPI is given by
\begin{align}
y_{\mathrm{c}}^{q,t}=\sqrt{\beta}\left(\mathbf{v}^{q}\right)^{T}\left(\mathbf{h}_{\mathrm{e,r}}^{q,t}\left(\mathbf{v}_{\mathrm{e}}^{t},\boldsymbol{\psi}_{\text{e}}^{t}\right)\right)^{T}y_{\mathrm{e}}^{t}+n_{\mathrm{c}}^{q,t},
\end{align}
where $n_{\mathrm{c}}^{q,t}\sim\mathcal{CN}(0,M_{\text{r}}\sigma_{\mathrm{c}}^{2})$ denotes the AWGN at the $q$-th LCX with $\sigma_{\mathrm{c}}^{2}$ denoting the noise power, $\beta\in\mathbb{C}^{1\times1}$ denotes the radar cross section (RCS) of Eve, and the combination vector $\mathbf{v}^{q}\in\mathbb{C}^{M_{\text{r}}\times1}$ is given by
\begin{align}
\mathbf{v}^{q}\triangleq\left[e^{-j\frac{2\pi n_{\text{r}}}{\lambda}x_{\text{c}}^{q,1}},...,e^{-j\frac{2\pi n_{\text{r}}}{\lambda}x_{\text{c}}^{q,M_{\text{r}}}}\right]^{T},    
\end{align}
where $n_{\text{r}}$ denotes the effective refractive index of the LCX. Jointly considering all LCXs and assuming that $n_{\mathrm{c}}^{q,t}$ are independently and identically distributed, the received echo signal can be expressed as
\begin{align}
\label{eq:received-echo-signal-multi}
&\mathbf{y}_{\mathrm{c}}^{t}=[y_{\mathrm{c}}^{1,t},...,y_{\mathrm{c}}^{N_{\text{r}},t}]^{T}=\sqrt{\beta}\mathbf{V}^{T}\left(\mathbf{h}_{\text{e,r}}^{t}\left(\mathbf{v}_{\mathrm{e}}^{t},\boldsymbol{\psi}_{\text{e}}^{t}\right)\right)^{T}y_{\mathrm{e}}^{t}+\mathbf{n}_{\mathrm{c}}^{t}\nonumber\\
&=\sqrt{\beta}\mathbf{V}^{T}\left(\mathbf{h}_{\text{e,r}}^{t}\left(\mathbf{v}_{\mathrm{e}}^{t},\boldsymbol{\psi}_{\text{e}}^{t}\right)\right)^{T}\mathbf{h}_{\text{e,t}}^{t}\left(\mathbf{v}_{\mathrm{e}}^{t},\boldsymbol{\psi}_{\text{e}}^{t},\mathbf{X}_{\text{p}}\right)\mathbf{G}\left(\mathbf{X}_{\text{p}}\right)\mathbf{c}^{t}+\mathbf{n}_{\mathrm{c}}^{t},
\end{align}
where $\mathbf{n}_\mathrm{c}^{t}\sim\mathcal{CN}(\mathbf{0}_{N_{\text{r}}},M_{\text{r}}\sigma_\mathrm{c}^2\mathbf{I}_{N_{\text{r}}})$ denotes the Gaussian noise, the combination matrix $\mathbf{V}\in\mathbb{C}^{N_{\text{r}}M_{\text{r}}\times N_{\text{r}}}$ is given by
\begin{align}
\mathbf{V}=
\begin{bmatrix}
\mathbf{v}^1 & \mathbf{0} & \ldots & \mathbf{0} \\
\mathbf{0} & \mathbf{v}^2 & \ldots & \mathbf{0} \\
\vdots & \vdots & \ddots & \vdots \\
\mathbf{0} & \mathbf{0} & \ldots & \mathbf{v}^{N_{\text{r}}}
\end{bmatrix},
\end{align}
and $\mathbf{h}_{\text{e,r}}^{t}\left(\mathbf{v}_{\mathrm{e}}^{t},\boldsymbol{\psi}_{\text{e}}^{t}\right)=\left[\mathbf{h}_{\text{e,r}}^{1,t}\left(\mathbf{v}_{\mathrm{e}}^{t},\boldsymbol{\psi}_{\text{e}}^{t}\right),\ldots,\mathbf{h}_{\text{e,r}}^{N_{\text{r}},t}\left(\mathbf{v}_{\mathrm{e}}^{t},\boldsymbol{\psi}_{\text{e}}^{t}\right)\right]\in\mathbb{C}^{1\times N_{\text{r}}M_{\text{r}}}$.

\subsection{Communication Signal Model}
The transmission to Bob via the $n$-th waveguide can be decomposed into two components: in-waveguide propagation from the feed point to PAs, and free-space propagation from PAs to Bob, denoted by $\mathbf{g}\left(\mathbf{x}_{\text{p}}^{n}\right)$ and $\mathbf{h}_{\text{b}}^n\left(\mathbf{x}_{\text{p}}^{n}\right)$, respectively. Accordingly, the received signal at Bob in the $t$-th CPI can be given by
\begin{align}
\label{eq:signal-Bob}
y_\text{b}^{t}=\mathbf{h}_{\text{b}}\left(\mathbf{X}_{\text{p}}\right)\mathbf{G}\left(\mathbf{X}_{\text{p}}\right)\mathbf{c}^{t}+n_\mathrm{b},
\end{align}
where $n_{\text{b}}\sim\mathcal{CN}(0,\sigma_{\text{b}}^{2})$ represents the AWGN at Bob, with $\sigma_{\text{b}}^{2}$ denoting the noise power, $\mathbf{h}_{\text{b}}\left(\mathbf{X}_{\text{p}}\right)=\left[\mathbf{h}_{\text{b}}^1\left(\mathbf{x}_{\mathrm{p}}^{1}\right),\ldots,\mathbf{h}_{\text{b}}^{N_{\text{t}}}\left(\mathbf{x}_{\mathrm{p}}^{N_{\text{t}}}\right)\right]\in\mathbb{C}^{1\times N_{\text{t}}M_{\text{t}}}$ denotes the channel vector from PAs over all waveguides to Bob, and $\mathbf{h}_{\text{b}}^{n}\left(\mathbf{x}_{\mathrm{p}}^{n}\right)$ is given by
\begin{align}
    & \mathbf{h}_{\text{b}}^{n}\left(\mathbf{x}_{\mathrm{p}}^{n}\right)= \left[\frac{\eta e^{-j\frac{2\pi}{\lambda}\left\|\boldsymbol{\psi}_{\text{b}}-\boldsymbol{\psi}^{n,1}_{\text{p}}\right\|}}{\left\|\boldsymbol{\psi}_{\text{b}}-\boldsymbol{\psi}^{n,1}_{\text{p}}\right\|}, ..., \frac{\eta e^{-j\frac{2\pi}{\lambda}\left\|\boldsymbol{\psi}_{\text{b}}-\boldsymbol{\psi}^{n,M_{\text{t}}}_{\text{p}}\right\|}}{\left\|\boldsymbol{\psi}_{\text{b}}-\boldsymbol{\psi}^{n,M_{\text{t}}}_{\text{p}}\right\|}\right].
\end{align}
Accordingly, the achievable data rate at Bob is given by
\begin{align}
\label{eq:rate-Bob}
R_{\text{b}}^{t}\left(\mathbf{w}^{t},\mathbf{z}^{t},\mathbf{X}_{\text{p}}\right)=\log_{2}\left(1+\frac{\left|\mathbf{h}_{\text{b}}\left(\mathbf{X}_{\text{p}}\right)\mathbf{G}\left(\mathbf{X}_{\text{p}}\right)\mathbf{w}^{t}\right|^2}{\left|\mathbf{h}_{\text{b}}\left(\mathbf{X}_{\text{p}}\right)\mathbf{G}\left(\mathbf{X}_{\text{p}}\right)\mathbf{z}^{t}\right|^2+\sigma_\mathrm{b}^2}\right).
\end{align}
Similarly, the leakage information rate at Eve is given by
\begin{align}
\label{eq:rate-Eve}
R_{\text{e}}^{t}\left(\mathbf{w}^{t},\mathbf{z}^{t},\mathbf{X}_{\text{p}}\right)=\log_{2}\left(1+\frac{\left|\mathbf{h}_{\text{e,t}}^{t}\left(\mathbf{X}_{\text{p}}\right)\mathbf{G}\left(\mathbf{X}_{\text{p}}\right)\mathbf{w}^{t}\right|^2}{\left|\mathbf{h}_{\text{e,t}}^{t}\left(\mathbf{X}_{\text{p}}\right)\mathbf{G}\left(\mathbf{X}_{\text{p}}\right)\mathbf{z}^{t}\right|^2+\sigma_\mathrm{e}^2}\right).
\end{align}
Based on~\eqref{eq:rate-Bob} and~\eqref{eq:rate-Eve}, the secrecy rate in the $t$-th CPI can
be expressed as follows:
\begin{align}
R_{\text{s}}^{t}=\left[R_{\text{b}}^{t}-R_{\text{e}}^{t}\right]^+.
\end{align}

\section{EKF-Based Eve Tracking Scheme and Problem Formulation for Multiple-Waveguide PASS}
To obtain the Eve’s CSI, an EKF-based state tracking method is proposed to track the Eve's trajectory. Then, we formulate the average secrecy rate maximization problem.

\subsection{EKF Based Eve State Tracking}
\label{subsec:EKF}
To obtain the CSI of Eve, we propose an EKF-based method for tracking Eve’s movement, i.e., locations and velocities. In the $t$-th CPI, denote the velocity variances along the $x$- and $y$-axes with Gaussian random variables $\Delta v_{x}$ and $\Delta v_{y}$, respectively, each with zero mean and variances $\sigma_{v_x}^2$ and $\sigma_{v_y}^2$, respectively. Then, we introduce the kinematic model, which characterizes the evolution of Eve’s mobility state. Specifically, the model can be expressed as
\begin{subequations}
\label{eq:kinematic-model}
\begin{equation}
\label{eq:kinematic-model-vx}
v_{x}^{t}=v_{x}^{t-1}+\Delta v_{x},
\end{equation}
\begin{equation}
\label{eq:kinematic-model-vy}
v_{y}^{t}=v_{y}^{t-1}+\Delta v_{y},
\end{equation}
\begin{equation}
x_{\mathrm{e}}^{t}=x_{\mathrm{e}}^{t-1}+v_{x}^{t-1}\Delta T,
\end{equation}
\begin{equation}
y_{\mathrm{e}}^{t}=y_{\mathrm{e}}^{t-1}+v_{y}^{t-1}\Delta T.
\end{equation}
\end{subequations}
In the following, we first construct the state
transition and observation models, and then introduce the EKF framework, which consists of two main steps: prior prediction and posterior update.

\subsubsection{State Transition and Observation Model}
Denote the mobility status vector of Eve under the kinematic model as $\boldsymbol{\xi}^{t}=[x_{\mathrm{e}}^{t},y_{\mathrm{e}}^{t},v_{x}^{t},v_{y}^{t}]^{T}\in\mathbb{R}^{4\times1}$. Therefore, based on Eq.~\eqref{eq:kinematic-model}, the state transition model from the $(t-1)$-th CPI to the $t$-th CPI is given by
\begin{align}
\label{eq:state-transition-model}
\boldsymbol{\xi}^{t}=g(\boldsymbol{\xi}^{t-1})+\boldsymbol{\omega}^{t},
\end{align}
where $g(\cdot)$ denotes the states transition matrix, given by
\begin{align}
\label{eq:states-transition-matrix}
g\left(\boldsymbol{\xi}^{t-1}\right)=
\begin{bmatrix}
1 & 0 & \Delta T & 0 \\
0 & 1 & 0 & \Delta T \\
0 & 0 & 1 & 0 \\
0 & 0 & 0 & 1
\end{bmatrix}\boldsymbol{\xi}^{t-1}.
\end{align}
Still referring to Eq.~\eqref{eq:state-transition-model}, $\boldsymbol{\omega}^{t}=\left[0,0,\Delta v_{x},\Delta v_{y}\right]^{T}\in\mathbb{R}^{4\times1}$. Since $\Delta v_{x}$ and $\Delta v_{y}$ are assumed to be mutually independent, we have $\boldsymbol{\omega}^{t}\sim\mathcal{N}(\mathbf{0}_4,\mathbf{Q}_s)$ with $\mathbf{Q}_{s}=\mathrm{diag}\{0,0,\sigma_{v_x}^2,\sigma_{v_y}^2\}\in\mathbb{R}^{4\times4}$.

The observation model establishes the relationship between the received echo signals at the BS and Eve's mobility state. We define the observation model of Eve at the $t$-th CPI as
\begin{align}
\mathbf{y}_\text{c}^{t}\triangleq h\left(\boldsymbol{\xi}^{t}\right)+\mathbf{n}_{\mathrm{c}}^{t},
\end{align}
where $h\left(\cdot\right)$ denotes the function of the echo signal with respect to $\boldsymbol{\xi}^{t}$ according to Eq.~\eqref{eq:received-echo-signal-multi}.

\subsubsection{Prior Prediction}
To accurately characterize the temporal motion of Eve, it is essential to predict the next mobility status $\boldsymbol{\xi}^{t}$ based on the current mobility
state $\boldsymbol{\xi}^{t-1}$. According to Eq.~\eqref{eq:state-transition-model}, the current mobility status is fed to the states transition matrix $g(\cdot)$ to perform the prior prediction on the mobility status:
\begin{align}
\label{eq:prior-prediction}
\hat{\boldsymbol{\xi}}^{t}=g(\boldsymbol{\xi}^{t-1}).
\end{align}
The corresponding prediction error covariance
matrix $\mathbf{P}^{t|t-1}$ can be updated according to~\cite{9171304}
\begin{align}
\mathbf{P}^{t|t-1}=\mathbf{G}^{t-1}\mathbf{P}^{t-1}\left(\mathbf{G}^{t-1}\right)^{H}+\mathbf{Q}_{s},
\end{align}
where $\mathbf{G}^{t-1}$ denotes the Jacobian matrix of the kinematic model in Eq.~\eqref{eq:states-transition-matrix}, and $\mathbf{P}^{t-1}$ denotes the error covariance matrix of the mobility state at the $(t-1)$-th CPI.

\subsubsection{Posterior Update}
After receiving the echo signal in the $t$-th CPI, the prior estimate of Eve’s mobility state is updated accordingly. With the predicted mobility state $\hat{\boldsymbol{\xi}}^{t}$ and the received echo signal $\mathbf{y}_{\mathrm{c}}^{t}$, the Eve’s mobility state vector in the $t$-th CPI is updated by 
\begin{align}
\label{eq:posterior-update}
\tilde{\boldsymbol{\xi}}^{t}=\hat{\boldsymbol{\xi}}^{t}+\mathbf{K}^{t}\left(\mathbf{y}_{\mathrm{c}}^{t}-h\left(\hat{\boldsymbol{\xi}}^{t}\right)\right),
\end{align}
where $\mathbf{K}^{t}$ denotes the Kalman gain that is given by
\begin{align}
\label{eq:Kalman-gain}
\mathbf{K}^{t}=\mathbf{P}^{t|t-1}\left(\mathbf{J}^{t|t-1}\right)^H\left(\mathbf{S}^{t}\right)^{-1},
\end{align}
with $\mathbf{J}^{t|t-1}=\frac{\partial h\left(\boldsymbol{\xi}\right)}{\partial\boldsymbol{\xi}}\mid_{\boldsymbol{\xi}=\hat{\boldsymbol{\xi}}^{t}}$ denotes the Jacobian matrix of the observation model with respect to the mobility state. Still referring to Eq.~\eqref{eq:Kalman-gain}, $\mathbf{S}^{t}$ is the residual covariance matrix, given by
\begin{align}
\mathbf{S}^{t}=\mathbf{J}^{t|t-1}\mathbf{P}^{t|t-1}\left(\mathbf{J}^{t|t-1}\right)^H+\mathbf{R},
\end{align}
where $\mathbf{R}=M_{\text{r}}\sigma_\mathrm{c}^2\mathbf{I}_{N_\mathrm{r}}$. Finally, the error covariance matrix of the mobility state at the $t$-th CPI is updated according to
\begin{align}
\mathbf{P}^{t}=
\begin{pmatrix}
\mathbf{I}_4-\mathbf{K}^{t}\mathbf{J}^{t|t-1}
\end{pmatrix}\mathbf{P}^{t|t-1}.
\end{align}

\subsection{Problem Formulation}
With the proposed Eve tracking scheme, the BS can obtain Eve’s locations and velocities, and thus facilitates the subsequent pinching and baseband beamforming designs. On this basis, and given the similarity of the problem across each time block, we formulate the optimization problem with the aim of maximizing the average secrecy rate in one time block. Specifically, the optimization problem is given by
\begin{subequations}
\label{eq:optimization-problem-multiple}
\begin{equation}
\label{eq:objective-function-multiple}
\max_{\{\mathbf{w}^{t},\mathbf{z}^{t}\},\mathbf{X}_{\text{p}}}\frac{1}{T}\sum_{t=1}^{T} R_{\text{s}}^{t},
\end{equation}
\begin{equation}
\label{eq:maximum-power-multiple}
{\rm{s.t.}} \ \
\|\mathbf{w}^{t}\|^2+\|\mathbf{z}^{t}\|^2\leq P_{\max},\forall t,
\end{equation}
\begin{equation}
\label{eq:echo-signal-power-multiple}
p(\mathbf{w}^{t},\mathbf{z}^{t},\mathbf{X}_{\text{p}})\geq\Gamma,\forall t,
\end{equation}
\begin{equation}
\label{eq:position_constraint1-multiple}
0\leq x_{\text{p}}^{n,m}\leq L,\forall n,m,
\end{equation}
\begin{equation}
\label{eq:position_constraint2-multiple}
x_{\mathrm{p}}^{n,m+1}-x_{\mathrm{p}}^{n,m}\geq\Delta,\forall n,m.
\end{equation}
\end{subequations}
Constraint~\eqref{eq:maximum-power-multiple} restricts that the BS transmit power does not exceed the maximum budget $P_{\max}$. In constraint~\eqref{eq:echo-signal-power-multiple}, $p(\mathbf{c}^t,\mathbf{X}_{\text{p}})$ denotes the power of the received echo signal, which is given by
\begin{align}
\begin{aligned}
p(\mathbf{c}^t,\mathbf{X}_{\text{p}})\triangleq
\left(\mathbf{c}^{t}\right)^{H}\left(\mathbf{h}_{\mathrm{e}}^{t}\left(\mathbf{v}_{\mathrm{e}}^t,\boldsymbol{\psi}_{\text{e}}^t,\mathbf{X}_{\text{p}}\right)\right)^{H}\mathbf{h}_{\mathrm{e}}^{t}\left(\mathbf{v}_{\mathrm{e}}^t,\boldsymbol{\psi}_{\text{e}}^t,\mathbf{X}_{\text{p}}\right)\mathbf{c}^{t},
\end{aligned}
\end{align}
where $\mathbf{h}_{\text{e}}^{t}\left(\mathbf{v}_{\mathrm{e}}^{t},\boldsymbol{\psi}_{\text{e}}^{t},\mathbf{X}_{\text{p}}\right)$ denotes the round-trip channel vector in the $t$-th CPI, given by
\begin{align}
&\mathbf{h}_{\text{e}}^{t}\left(\mathbf{v}_{\mathrm{e}}^{t},\boldsymbol{\psi}_{\text{e}}^{t},\mathbf{X}_{\text{p}}\right)\nonumber\\&\triangleq\sqrt{\beta}\mathbf{V}^{T}\left(\mathbf{h}_{\text{e,r}}^{t}\left(\mathbf{v}_{\mathrm{e}}^{t},\boldsymbol{\psi}_{\text{e}}^{t}\right)\right)^{T}\mathbf{h}_{\text{e,t}}^{t}\left(\mathbf{v}_{\mathrm{e}}^{t},\boldsymbol{\psi}_{\text{e}}^{t},\mathbf{X}_{\text{p}}\right)\mathbf{G}\left(\mathbf{X}_{\text{p}}\right).
\end{align}
Constraint~\eqref{eq:echo-signal-power-multiple} ensures that the power of the received echo signal exceeds a predefined threshold $\Gamma$. Moreover, constraint~\eqref{eq:position_constraint1-multiple}
restricts the length of waveguides. Constraint~\eqref{eq:position_constraint2-multiple} guarantees
the minimum spacing among PAs to be no smaller than $\Delta$ for avoiding antenna coupling. Note that, due to the inherent nature of the two-timescale beamforming framework, on the one hand, the expression of $R_{\text{s}}^{t},\forall t$, for the optimization of pinching beamforming $\mathbf{X}_{\text{p}}$ is based on the prior knowledge of Eve's locations obtained by the prediction model given in Eq.~\eqref{eq:prior-prediction}. On the other hand, since the baseband beamforming $\left\{\mathbf{w}^{t},\mathbf{z}^{t}\right\}$ can be updated across each CPI, $R_{\text{s}}^{t},\forall t$ can be given in a more precise manner for the baseband beamforming optimization, by refining the Eve's positions with the echo signals based on Eq.~\eqref{eq:posterior-update}. To address this challenging issue of optimization across different timescales, we propose a corresponding two-timescale optimization approach, which is detailed in the following section.

\section{Proposed Two-Timescale Solution for Multiple-Waveguide PASS}
Although the baseband beamforming $\left\{\mathbf{w}^{t},\mathbf{z}^{t}\right\}$ can be updated across each CPI, it is coupled with the pinching beamforming as observed in~\eqref{eq:optimization-problem-multiple}. Therefore, for effectively solving the two-timescale beamforming problem~\eqref{eq:optimization-problem-multiple}, we propose to jointly optimize $\mathbf{X}_{\text{p}}$ and $\left\{\mathbf{w}^{t},\mathbf{z}^{t}\right\}$ at the beginning of each time block based on the predicted channels, and continuously update $\left\{\mathbf{w}^{t},\mathbf{z}^{t}\right\}$  with the refined channels during each CPI, which facilitates the two-timescale optimization algorithm. We invoke the AO algorithm for the joint optimization of $\mathbf{X}_{\text{p}}$ and $\left\{\mathbf{w}^{t},\mathbf{z}^{t}\right\}$, where the pinching and baseband beamforming are solved in an iterative manner. In the following, we first introduce the optimization methods for optimizing $\mathbf{X}_{\text{p}}$ and $\left\{\mathbf{w}^{t},\mathbf{z}^{t}\right\}$ individually, and then present the overall two-timescale optimization algorithm.

\subsection{Large-timescale Pinching Beamforming Optimization}
With fixed baseband beamforming vectors $\{\mathbf{w}^{t},\mathbf{z}^{t}\}$, the subproblem with
respect to $\mathbf{X}_{\text{p}}$ can be formulated as
\begin{subequations}
\label{eq:optimization-problem-multiple-X-1}
\begin{equation}
\label{eq:objective-function-multiple-X-1}
\max_{\mathbf{X}_{\text{p}}}\frac{1}{T}\sum_{t=1}^{T} R_{\text{s}}^{t},
\end{equation}
\begin{equation}
\label{eq:echo-signal-power-multiple-X-1}
{\rm{s.t.}} \ \
p\left(\mathbf{X}_{\text{p}}\right)\geq\Gamma,\forall t,
\end{equation}
\begin{equation}
\label{eq:position_constraint1-multiple-X-1}
0\leq x_{\text{p}}^{n,m}\leq L,\forall n,m,
\end{equation}
\begin{equation}
\label{eq:position_constraint2-multiple-X-1}
x_{\text{p}}^{n,m+1}-x_{\text{p}}^{n,m}\geq\Delta,\forall n,m.
\end{equation}
\end{subequations}
It can be observed that elements $\left\{x_{\text{p}}^{n,m}\right\},\forall n,m$ of $\mathbf{X}_{\text{p}}$ are coupled in the objective function and constraints~\eqref{eq:echo-signal-power-multiple-X-1},~\eqref{eq:position_constraint2-multiple-X-1}, which makes~\eqref{eq:optimization-problem-multiple-X-1} an intractable problem. To tackle this issue, we propose an element-wise alternating optimization method, in which each $x_{\text{p}}^{n,m}\in\mathbf{X}_{\text{p}}$ is optimized while keeping the other elements fixed.

\begin{figure*}
\begin{subequations}
\label{eq:optimization-problem-multiple-X-2}
\begin{equation}
\label{eq:objective-function-multiple-X-2}
\max_{x_{\text{p}}^{n,m}}\frac{1}{T}\sum_{t=1}^{T} \left[\log_{2}\left(1+\frac{\left|A_{\text{b}}^{t}\left(x_{\text{p}}^{n,m}\right)\right|^2}{\left|C_{\text{b}}^{t}\left(x_{\text{p}}^{n,m}\right)\right|^2+M_{\text{t}}\sigma_\mathrm{b}^2}\right)-\log_{2}\left(1+\frac{\left|A_{\text{e}}^{t}\left(x_{\text{p}}^{n,m}\right)\right|^2}{\left|C_{\text{e}}^{t}\left(x_{\text{p}}^{n,m}\right)\right|^2+M_{\text{t}}\sigma_\mathrm{e}^2}\right)\right]^+,
\end{equation}
\begin{equation}
\label{eq:echo-signal-power-multiple-X-2}
{\rm{s.t.}} \ \
g\left(x_{\text{p}}^{n,m}\right)\geq\Gamma,\forall t,
\end{equation}
\begin{equation}
\label{eq:position_constraint-multiple-X-2}
\eqref{eq:position_constraint1-multiple-X-1},\eqref{eq:position_constraint2-multiple-X-1}.
\notag
\end{equation}
\end{subequations}
\hrulefill
\end{figure*}
Specifically, the subproblem with respect to $x_{\text{p}}^{n,m}$ is given by~\eqref{eq:optimization-problem-multiple-X-2}, shown at the top of the this page. $A_{\text{b}}\left(x_{\text{p}}^{n,m}\right)$, $C_{\text{b}}\left(x_{\text{p}}^{n,m}\right)$, $A_{\text{e}}^t\left(x_{\text{p}}^{n,m}\right)$ and $C_{\text{e}}^t\left(x_{\text{p}}^{n,m}\right)$ represent the effective signal terms that depends on $x_{\text{p}}^{n,m}$, given by
\begin{subequations}
\begin{equation}
A_{\text{b}}^{t}\left(x_{\text{p}}^{n,m}\right)=w_{n}^{t}\frac{\eta e^{-j\theta^{n,m}}}{\left\|\psi_{\text{b}}-\psi_{\mathrm{p}}^{n,m}\right\|}+\sum_{i=1}^{N_{\text{t}}}\sum_{q\neq m}^{M_{\text{t}}}w_{i}^{t}\frac{\eta e^{-j\theta^{i,q}}}{\left\|\psi_{\text{b}}-\psi_{\mathrm{p}}^{i,q}\right\|},
\end{equation}
\begin{equation}
C_{\text{b}}^{t}\left(x_{\text{p}}^{n,m}\right)=z_{n}^{t}\frac{\eta e^{-j\theta^{n,m}}}{\left\|\psi_{\text{b}}-\psi_{\mathrm{p}}^{n,m}\right\|}+\sum_{i=1}^{N_{\text{t}}}\sum_{q\neq m}^{M_{\text{t}}}z_{i}^{t}\frac{\eta e^{-j\theta^{i,q}}}{\left\|\psi_{\text{b}}-\psi_{\mathrm{p}}^{i,q}\right\|},
\end{equation}
\begin{equation}
A_{\text{e}}^t\left(x_{\text{p}}^{n,m}\right)=w_{n}^{t}\frac{\eta e^{-j\phi^{t,n,m}}}{\left\|\psi_{\text{e}}^t-\psi_{\mathrm{p}}^{n,m}\right\|}+\sum_{i=1}^{N_{\text{t}}}\sum_{q\neq m}^{M_{\text{t}}}w_{i}^{t}\frac{\eta e^{-j\phi^{t,i,q}}}{\left\|\psi_{\text{e}}^t-\psi_{\mathrm{p}}^{i,q}\right\|},
\end{equation}
\begin{equation}
C_{\text{e}}^t\left(x_{\text{p}}^{n,m}\right)=z_{n}^{t}\frac{\eta e^{-j\phi^{t,n,m}}}{\left\|\psi_{\text{e}}^t-\psi_{\mathrm{p}}^{n,m}\right\|}+\sum_{i=1}^{N_{\text{t}}}\sum_{q\neq m}^{M_{\text{t}}}z_{i}^{t}\frac{\eta e^{-j\phi^{t,i,q}}}{\left\|\psi_{\text{e}}^t-\psi_{\mathrm{p}}^{i,q}\right\|},
\end{equation}
\end{subequations}
where $w_{n}^{t}$ and $z_{n}^{t}$ represent the $n$-th element of $\mathbf{w}^{t}$ and $\mathbf{z}^{t}$, respectively. Moreover, $\theta^{n,m}=\frac{2\pi}{\lambda}\left\|\boldsymbol{\psi}_{\text{b}}-\boldsymbol{\psi}^{n,m}_{\text{p}}\right\|+\frac{2\pi }{\lambda_{\text{g}}}x_{\text{p}}^{n,m}$ and $\phi^{t,n,m}=\frac{2\pi}{\lambda}\left(\left\|\boldsymbol{\psi}_{\text{e}}^{t}-\boldsymbol{\psi}^{n,m}_{\text{p}}\right\|+\Delta Tv^{t,n,m}\right)+\frac{2\pi }{\lambda_{\text{g}}}x_{\text{p}}^{n,m}$ denote the phase-shifts due to wave propagation both inside the waveguide and in the free space. 
Observing that~\eqref{eq:optimization-problem-multiple-X-2} is a single-variable optimization problem with restricted feasible set, which can be effectively solved with the one-dimensional search, so as to avoid local optimum. Specifically, the feasible set in~\eqref{eq:position_constraint-multiple-X-2} is discretized into a uniformly spaced grid of candidate locations $\mathcal{S}_{x}$ to enable fine search, which is given by
\begin{align}
\label{eq:feasible-set}
\mathcal{S}_x=\left\{0,\frac{L}{N-1},\frac{2L}{N-1},\ldots,L\right\},
\end{align}
where $N=|\mathcal{S}_{x}|$ denotes the number of searching grids. Therefore, the optimal solution of $x_{\text{p}}^{n,m}$ is searched over the fine grid subject to the constraints~\eqref{eq:echo-signal-power-multiple-X-2} and~\eqref{eq:position_constraint-multiple-X-2}. The detailed algorithm for solving the pinching beamforming problem~\eqref{eq:optimization-problem-multiple-X-1} is given in~\textbf{Algorithm~\ref{alg:element-wise}}.

\begin{algorithm}[tp]
	\caption{Proposed Element-wise Optimization Algorithm for Solving Problem~\eqref{eq:optimization-problem-multiple-X-1}}
        \label{alg:element-wise}
    \LinesNumbered
    \KwIn{Bob's location $\boldsymbol{\psi}_{\text{b}}$, predicted Eve's mobility states $\{\hat{\boldsymbol{\xi}}^{t}\}_{t=1}^{T}$, convergence criterion $\epsilon$}
    Initialize the pinching beamforming $\mathbf{X}_{\text{p}}$;\\
    \Repeat{the increment of $\frac{1}{T}\sum_{t=1}^{T} R_{\text{s}}^{t}$ is below $\epsilon$}
    {
    \For{$n=1$ \KwTo $N_{\text{t}}$}
    {
    \For{$m=1$ \KwTo $M_{\text{t}}$}
    {
    Update $x_{\text{p}}^{n,m}$ by solving problem~\eqref{eq:optimization-problem-multiple-X-2} through one-dimensional search\;
    }
    }
    }
    \KwOut{Optimized pinching beamforming $\mathbf{X}_{\text{p}}$.}
\end{algorithm}

\subsection{Small-timescale Baseband Beamforming Optimization}
\label{subsec:baseband-beamforming}
Assuming fixed PAs positions $\mathbf{X}_{\text{p}}$ and introducing $\mathbf{W}^{t}=\mathbf{w}^{t}\left(\mathbf{w}^{t}\right)^{H}\in\mathbb{C}^{N_{\text{t}}\times N_{\text{t}}}$, $\mathbf{Z}^{t}=\mathbf{z}^{t}\left(\mathbf{z}^{t}\right)^{H}\in\mathbb{C}^{N_{\text{t}}\times N_{\text{t}}}$ and $\mathbf{H}_\text{e}^{t}=\left(\mathbf{h}_\text{e}^{t}\right)^H\mathbf{h}_\text{e}^{t}\in\mathbb{C}^{N_{\text{t}}\times N_{\text{t}}}$, problem~\eqref{eq:optimization-problem-multiple} degenerates into the following baseband beamforming problem:
\begin{subequations}
\label{eq:optimization-problem-multiple-wz}
\begin{equation}
\label{eq:objective-function-multiple-wz}
\max_{\{\mathbf{W}^{t},\mathbf{Z}^{t}\}} R_{\text{s}}^{t},
\end{equation}
\begin{equation}
\label{eq:maximum-power-multiple-wz}
{\rm{s.t.}} \ \
\mathrm{Tr}\left(\mathbf{W}^{t}\right)+\mathrm{Tr}\left(\mathbf{Z}^{t}\right)\leq P_{\max},\forall t,
\end{equation}
\begin{equation}
\label{eq:echo-signal-power-multiple-wz}
\mathrm{Tr}\left(\mathbf{H}_\text{e}^{t}\mathbf{W}^{t}\right)+\mathrm{Tr}\left(\mathbf{H}_\text{e}^{t}\mathbf{Z}^{t}\right)\geq\Gamma,\forall t,
\end{equation}
\begin{equation}
\label{eq:semi-positive-constraint}
\mathbf{W}^{t},\mathbf{Z}^{t}\succeq0,\forall t,
\end{equation}
\begin{equation}
\label{eq:rank1-constraint-W}
\mathrm{Rank}\left(\mathbf{W}^{t}\right)=1,\forall t,
\end{equation}
\begin{equation}
\label{eq:rank1-constraint-Z}
\mathrm{Rank}\left(\mathbf{Z}^{t}\right)=1,\forall t.
\end{equation}
\end{subequations}
For the ease of tractability, we further rewrite $R_{\text{s}}^{t}$ as follows:
\begin{align}
R_{\text{s}}^{t}&=\left[\log_2\left(1+\frac{\mathrm{Tr}\left(\bar{\mathbf{h}}_\text{b}^H\mathbf{W}^{t}\bar{\mathbf{h}}_\text{b}\right)}{\mathrm{Tr}\left(\bar{\mathbf{h}}_\text{b}^H\mathbf{Z}^{t}\bar{\mathbf{h}}_\text{b}\right)+\sigma_\text{b}^2}\right)\right.\nonumber\\
&-\left.\log_2\left(1+\frac{\mathrm{Tr}\left(\left(\bar{\mathbf{h}}_\text{e,t}^{t}\right)^H\mathbf{W}^{t}\bar{\mathbf{h}}_\text{e,t}^{t}\right)}{\mathrm{Tr}\left(\left(\bar{\mathbf{h}}_\text{e,t}^{t}\right)^H\mathbf{Z}^{t}\bar{\mathbf{h}}_\text{e,t}^{t}\right)+\sigma_\text{e}^2}\right)\right]^+,
\end{align}
where $\bar{\mathbf{h}}_\text{b}=\mathbf{h}_{\text{b}}\left(\mathbf{X}_{\text{p}}\right)\mathbf{G}\left(\mathbf{X}_{\text{p}}\right)$ and $\bar{\mathbf{h}}_\text{e,t}^{t}=\mathbf{h}_{\text{e,t}}^{t}\left(\mathbf{X}_{\text{p}}\right)\mathbf{G}\left(\mathbf{X}_{\text{p}}\right)$.

To solve the non-convex problem~\eqref{eq:optimization-problem-multiple-wz}, we first omit the rank-one constraints~\eqref{eq:rank1-constraint-W} and~\eqref{eq:rank1-constraint-Z}, and directly solve the relaxed version of the problem with \textbf{Lemma~\ref{lemma:rank1}}.
\begin{lemma}
\label{lemma:rank1}
The optimal transmit covariance $\mathbf{W}^{t}$ and the AN covariance $\mathbf{Z}^{t}$ of the relaxed version of problem~\eqref{eq:optimization-problem-multiple-wz} always satisfy the rank-one constraint~\eqref{eq:rank1-constraint-W} and~\eqref{eq:rank1-constraint-Z}.
\end{lemma}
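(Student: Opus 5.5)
The plan is to convert the relaxed problem into a convex semidefinite program (SDP) for a fixed auxiliary parameter. I would then use an optimality or rank-reduction argument to show that some optimal pair $(\mathbf{W}^{t},\mathbf{Z}^{t})$ has rank one. If the relaxation happens to have non-unique optimal solutions, the conclusion of Lemma~\ref{lemma:rank1} should be read as "a rank-one optimum always exists and can be constructed."

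\emph{Step 1 (reformulation).} Write $\mathbf{H}_\text{b}=\bar{\mathbf{h}}_\text{b}^H\bar{\mathbf{h}}_\text{b}$ and $\mathbf{H}_{\text{e,t}}=(\bar{\mathbf{h}}_\text{e,t}^{t})^H\bar{\mathbf{h}}_\text{e,t}^{t}$; both are rank one. Introduce the slack $\tau\ge 1$ to bound Eve's SINR term, i.e. $1+\frac{\mathrm{Tr}(\mathbf{H}_{\text{e,t}}\mathbf{W})}{\mathrm{Tr}(\mathbf{H}_{\text{e,t}}\mathbf{Z})+\sigma_\text{e}^2}\le\tau$. This constraint is linear in $(\mathbf{W},\mathbf{Z})$: $\mathrm{Tr}(\mathbf{H}_{\text{e,t}}\mathbf{W})-(\tau-1)\mathrm{Tr}(\mathbf{H}_{\text{e,t}}\mathbf{Z})\le(\tau-1)\sigma_\text{e}^2$. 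For fixed $\tau$, maximizing Bob's SINR is a linear-fractional SDP. Apply the Charnes--Cooper transform $\tilde{\mathbf{W}}=\kappa\mathbf{W}$, $\tilde{\mathbf{Z}}=\kappa\mathbf{Z}$ with $\kappa$ chosen so that $\mathrm{Tr}(\mathbf{H}_\text{b}\tilde{\mathbf{Z}})+\kappa\sigma_\text{b}^2=1$. The result is a standard SDP in $(\tilde{\mathbf{W}},\tilde{\mathbf{Z}},\kappa)$ with the following linear constraints: the normalization, the Eve constraint, the power constraint, and the echo constraint. The outer problem is a one-dimensional search over $\tau$. It therefore suffices to show that, for every $\tau$, the inner SDP admits a rank-one optimum.

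\emph{Step 2 (rank of $\mathbf{W}$).} Write the Lagrangian with multipliers $\lambda\ge0$ for power, $\mu\ge0$ for the echo constraint, $\nu\ge0$ for the Eve constraint and $\rho$ for the normalization. The KKT condition for $\tilde{\mathbf{W}}$ gives a dual matrix
\[
\mathbf{Y}_W=\lambda\mathbf{I}-\mu\mathbf{H}_\text{e}^{t}+\nu\mathbf{H}_{\text{e,t}}-\mathbf{H}_\text{b}\succeq 0,\qquad \mathbf{Y}_W\tilde{\mathbf{W}}=\mathbf{0}.
\]
The matrix $\lambda\mathbf{I}-\mu\mathbf{H}_\text{e}^{t}+\nu\mathbf{H}_{\text{e,t}}$ is positive definite whenever $\lambda>0$, and $\lambda>0$ should hold because the power constraint is active at the optimum. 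Then $\mathrm{rank}(\mathbf{Y}_W)\ge N_{\text{t}}-1$, so $\mathrm{rank}(\tilde{\mathbf{W}})\le 1$.

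\emph{Main obstacle.} The argument above fails as stated because $\mathbf{H}_\text{e}^{t}=(\mathbf{h}_\text{e}^{t})^H\mathbf{h}_\text{e}^{t}$ may have rank up to $\min(N_{\text{r}},N_{\text{t}})$. Positive definiteness of $\lambda\mathbf{I}-\mu\mathbf{H}_\text{e}^{t}+\nu\mathbf{H}_{\text{e,t}}$ is therefore not automatic when $\mu>0$. The same issue arises for $\mathbf{Z}$, whose dual matrix also contains $\mathbf{H}_\text{b}$ and $\mathbf{H}_{\text{e,t}}$ with mixed signs.

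Here I would fall back on a constructive rank-reduction argument in the spirit of Huang--Palomar. Take any optimal $(\tilde{\mathbf{W}},\tilde{\mathbf{Z}})$. The objective and all constraints depend on the pair only through a few scalars:
\begin{itemize}
\item $\mathrm{Tr}(\mathbf{H}_\text{b}\cdot)$ and $\mathrm{Tr}(\mathbf{H}_{\text{e,t}}\cdot)$ for each matrix,
\item the trace sum,
\item the echo sum $\mathrm{Tr}(\mathbf{H}_\text{e}^{t}(\tilde{\mathbf{W}}+\tilde{\mathbf{Z}}))$.
\end{itemize}
First, I would use the rank-one matrix decomposition to write $\tilde{\mathbf{W}}=\sum_i \mathbf{u}_i\mathbf{u}_i^H$ with $\mathbf{u}_i^H\mathbf{H}_\text{b}\mathbf{u}_i=\mathrm{Tr}(\mathbf{H}_\text{b}\tilde{\mathbf{W}})/r$, and the same for $\mathbf{H}_{\text{e,t}}$; this is possible because only two Hermitian matrices must be equalized simultaneously. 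Next, I would move all components except one into the AN. Because the signal only enters Bob's and Eve's numerators, and the echo/power constraints depend on $\mathbf{W}+\mathbf{Z}$ only, a single aligned vector $\sqrt{r}\,\mathbf{u}_1$ reproduces both numerators exactly. The leftover energy can be re-absorbed by $\tilde{\mathbf{Z}}$ without violating the sensing or power constraints.

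For $\tilde{\mathbf{Z}}$ I would apply the SDP rank bound: an extreme optimal solution of an SDP with $K$ linear constraints satisfies $\sum \mathrm{rank}^2\le K$. With $\tilde{\mathbf{W}}$ already fixed rank one, only four constraints remain on $\tilde{\mathbf{Z}}$: normalization, Eve, power and echo. This bound gives $\mathrm{rank}(\tilde{\mathbf{Z}})\le 2$. A final purification step, Huang--Palomar's rank-one decomposition of the remaining rank-2 matrix, preserves two quadratic forms and hence yields a rank-one $\tilde{\mathbf{Z}}$. The objective value is unchanged throughout. Rescaling by $\kappa$ and optimizing over $\tau$ then gives the claim of Lemma~\ref{lemma:rank1}.
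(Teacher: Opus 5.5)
Your fallback rank-reduction step for $\mathbf{W}$ does not work as written. After the equalizing decomposition $\tilde{\mathbf{W}}=\sum_{i=1}^{r}\mathbf{u}_i\mathbf{u}_i^H$, the matrix $r\mathbf{u}_1\mathbf{u}_1^H$ does reproduce both numerators, but three things go wrong.
\begin{enumerate}
\item Its trace $r\|\mathbf{u}_1\|^2$ can exceed $\mathrm{Tr}(\tilde{\mathbf{W}})$. The decomposition equalizes the two channel forms but not the norms.
\item The difference $\tilde{\mathbf{W}}-r\mathbf{u}_1\mathbf{u}_1^H=\sum_{i\ge2}\mathbf{u}_i\mathbf{u}_i^H-(r-1)\mathbf{u}_1\mathbf{u}_1^H$ is not PSD for $r\ge2$. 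So there is no ``leftover'' that $\tilde{\mathbf{Z}}$ can absorb while keeping $\tilde{\mathbf{W}}+\tilde{\mathbf{Z}}$ fixed.
\item Whatever PSD matrix you do add to $\tilde{\mathbf{Z}}$ (for example $\sum_{i\ge2}\mathbf{u}_i\mathbf{u}_i^H$) raises $\mathrm{Tr}(\bar{\mathbf{H}}_{\text{b}}\tilde{\mathbf{Z}})$ and $\mathrm{Tr}(\bar{\mathbf{H}}_{\text{e,t}}^{t}\tilde{\mathbf{Z}})$. That changes the normalization constraint and both interference terms, so the claim that the objective is unchanged fails.
\end{enumerate}
The purification of $\tilde{\mathbf{Z}}$ has the same defect. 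A rank-one decomposition that preserves two quadratic forms leaves $\mathrm{Tr}(\tilde{\mathbf{Z}})$ and, in your model, the echo form uncontrolled. These two must move in opposite directions (power down, echo up), so feasibility of the purified point is not guaranteed.

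The idea you are missing is that your ``main obstacle'' does not exist in this model. The round-trip channel $\mathbf{h}_{\text{e}}^{t}=\sqrt{\beta}\,\mathbf{a}^{t}\bar{\mathbf{h}}_{\text{e,t}}^{t}$, with $\mathbf{a}^{t}=\mathbf{V}^{T}(\mathbf{h}_{\text{e,r}}^{t})^{T}\in\mathbb{C}^{N_{\text{r}}\times1}$, is an outer product (point target). Hence $\mathbf{H}_{\text{e}}^{t}=|\beta|\,\|\mathbf{a}^{t}\|^{2}\,\bar{\mathbf{H}}_{\text{e,t}}^{t}$ is rank one, and the echo constraint is merely a lower bound on $\mathrm{Tr}(\bar{\mathbf{H}}_{\text{e,t}}^{t}(\mathbf{W}^{t}+\mathbf{Z}^{t}))$. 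The relaxed problem therefore depends on each of $\mathbf{W}^{t}$, $\mathbf{Z}^{t}$ only through $\mathrm{Tr}(\bar{\mathbf{H}}_{\text{b}}\,\cdot)$, $\mathrm{Tr}(\bar{\mathbf{H}}_{\text{e,t}}^{t}\,\cdot)$ and $\mathrm{Tr}(\cdot)$. No Charnes--Cooper step or re-absorption is needed. Decompose an optimal $\mathbf{W}^{t}$ of rank $r$ with both channel forms equalized, choose the component $\mathbf{u}_{i^\star}$ of smallest norm, and replace $\mathbf{W}^{t}$ by $r\mathbf{u}_{i^\star}\mathbf{u}_{i^\star}^H$. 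Both SINRs and the echo power are unchanged, and the trace does not increase because $\min_i\|\mathbf{u}_i\|^2\le\mathrm{Tr}(\mathbf{W}^{t})/r$. Repeating this for $\mathbf{Z}^{t}$ gives a rank-one optimal pair.

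Alternatively, your own rank bound closes the gap if you apply it jointly rather than with $\tilde{\mathbf{W}}$ fixed. The Charnes--Cooper SDP has blocks $(\tilde{\mathbf{W}},\tilde{\mathbf{Z}},\kappa)$ and four constraints. Also $\kappa>0$ at every feasible point, since otherwise the power constraint forces $\tilde{\mathbf{W}}=\tilde{\mathbf{Z}}=\mathbf{0}$, contradicting the normalization. Huang--Palomar then guarantees an optimal solution with $\mathrm{rank}^2(\tilde{\mathbf{W}})+\mathrm{rank}^2(\tilde{\mathbf{Z}})\le 3$, so both ranks are at most one. This holds even for a general echo matrix, under the usual solvability assumptions.

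The paper itself gives no argument and only defers to Lemma~2 of~\cite{11202497}. Either route above establishes existence of a rank-one optimum, which, as you note, is the defensible reading of ``always''.
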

\begin{proof}
The proof follows similar steps to those in~\cite[Lemma~2]{11202497}. Hence, we omit the details here.
\end{proof}
Next, we introduce the exponential auxiliary variables $\tau^{t},\varepsilon^{t},u^{t},v^{t}$ as follows:
\begin{subequations}
\begin{equation}
e^{\tau^{t}}=\mathrm{Tr}\left(\bar{\mathbf{H}}_\text{b}\mathbf{W}^{t}\right)+\mathrm{Tr}\left(\bar{\mathbf{H}}_\text{b}\mathbf{Z}^{t}\right)+\sigma_\text{b}^2,
\end{equation}
\begin{equation}
e^{\varepsilon^{t}}=\mathrm{Tr}\left(\bar{\mathbf{H}}_\text{b}\mathbf{Z}^{t}\right)+\sigma_\text{b}^2,
\end{equation}
\begin{equation}
e^{u^{t}}=\mathrm{Tr}\left(\bar{\mathbf{H}}_\text{e,t}^{t}\mathbf{W}^{t}\right)+\mathrm{Tr}\left(\bar{\mathbf{H}}_\text{e,t}^{t}\mathbf{Z}^{t}\right)+\sigma_\text{e}^2,
\end{equation}
\begin{equation}
e^{v^{t}}=\mathrm{Tr}\left(\bar{\mathbf{H}}_\text{e,t}^{t}\mathbf{Z}^{t}\right)+\sigma_\text{e}^2,
\end{equation}
\end{subequations}
where $\bar{\mathbf{H}}_\text{b}=\bar{\mathbf{h}}_\text{b}^H\bar{\mathbf{h}}_\text{b}$ and $\bar{\mathbf{H}}_\text{e,t}^{t}=\left(\bar{\mathbf{h}}_\text{e,t}^{t}\right)^H\bar{\mathbf{h}}_\text{e,t}^{t}$. Then, problem~\eqref{eq:optimization-problem-multiple-wz} can be equivalently transformed to the following one:
\begin{subequations}
\label{eq:optimization-problem-multiple-wz-2}
\begin{equation}
\label{eq:objective-function-multiple-wz-2}
\max_{\tau^{t},\varepsilon^{t},u^{t},v^{t},\mathbf{W}^{t},\mathbf{Z}^{t}} \tau^{t}-\varepsilon^{t}-u^{t}+v^{t},
\end{equation}
\begin{equation}
\label{eq:constraint-tau}
{\rm{s.t.}} \ \
e^{\tau^{t}}\leq\mathrm{Tr}\left(\bar{\mathbf{H}}_\text{b}\mathbf{W}^{t}\right)+\mathrm{Tr}\left(\bar{\mathbf{H}}_\text{b}\mathbf{Z}^{t}\right)+\sigma_\text{b}^2,\forall t,
\end{equation}
\begin{equation}
\label{eq:constraint-varepsilon}
e^{\varepsilon^{t}}\geq\mathrm{Tr}\left(\bar{\mathbf{H}}_\text{b}\mathbf{Z}^{t}\right)+\sigma_\text{b}^2,\forall t,
\end{equation}
\begin{equation}
\label{eq:constraint-u}
e^{u^{t}}\geq\mathrm{Tr}\left(\bar{\mathbf{H}}_\text{e,t}^{t}\mathbf{W}^{t}\right)+\mathrm{Tr}\left(\bar{\mathbf{H}}_\text{e,t}^{t}\mathbf{Z}^{t}\right)+\sigma_\text{e}^2,\forall t,
\end{equation}
\begin{equation}
\label{eq:constraint-v}
e^{v^{t}}\leq\mathrm{Tr}\left(\bar{\mathbf{H}}_\text{e,t}^{t}\mathbf{Z}^{t}\right)+\sigma_\text{e}^2,\forall t,
\end{equation}
\begin{equation}
\eqref{eq:maximum-power-multiple-wz}, \eqref{eq:echo-signal-power-multiple-wz}, \eqref{eq:semi-positive-constraint}.
\notag
\end{equation}
\end{subequations}
Obviously,~\eqref{eq:constraint-tau}-\eqref{eq:constraint-v} hold equality at the
optimum point, which follows from the monotonicity of the objective function. Nevertheless, constraints~\eqref{eq:constraint-varepsilon} and~\eqref{eq:constraint-u} remain non-convex. To convert the non-convex constraints into convex ones, we employ the SCA method. Specifically, the first-order Taylor expansions of~\eqref{eq:constraint-varepsilon} and~\eqref{eq:constraint-u} are given by
\begin{align}
\label{eq:SCA-varepsilon}
\mathrm{Tr}\left(\bar{\mathbf{H}}_\text{b}\mathbf{Z}^{t}\right)+\sigma_\text{b}^2\leq e^{\varepsilon^{t,(k)}}\left(\varepsilon^{t}-\varepsilon^{t,(k)}+1\right),
\end{align}
and
\begin{align}
\label{eq:SCA-U}
\mathrm{Tr}\left(\bar{\mathbf{H}}_\text{e,t}^{t}\mathbf{W}^{t}\right)+\mathrm{Tr}\left(\bar{\mathbf{H}}_\text{e,t}^{t}\mathbf{Z}^{t}\right)+\sigma_\text{e}^2\leq e^{u^{t,(k)}}\left(u^{t}-u^{t,(k)}+1\right),
\end{align}
respectively, where $\varepsilon^{t,(k)}$ and $u^{t,(k)}$ are the optimal solutions at the $k$-th iteration. By replacing constraints~\eqref{eq:constraint-varepsilon} and~\eqref{eq:constraint-u} with inequalities~\eqref{eq:SCA-varepsilon} and~\eqref{eq:SCA-U}, respectively, problem~\eqref{eq:optimization-problem-multiple-wz-2} can be transformed into a convex problem, which can be solved using convex optimization tool box such as CVX~\cite{boyd2004convex}.

\begin{algorithm}[tp]
	\caption{Proposed Two-Timescale Optimization Algorithm for Solving Problem~\eqref{eq:optimization-problem-multiple}}
        \label{alg:AO}
    \LinesNumbered
    \KwIn{Bob's location $\boldsymbol{\psi}_{\text{b}}$, convergence criterion $\varepsilon$}
    \textbf{At the beginning of each time block (Large-timescale pinching beamforming):}\\
    Predict Eve’s mobility states over the next $T$ CPIs $\{\hat{\boldsymbol{\xi}}^{t}\}_{t=1}^{T}$ with the prediction model given in Eq.~\eqref{eq:prior-prediction};\\
	Initialize the beamforming matrices $\mathbf{X}_{\text{p}}^{(0)}$ and $\{\mathbf{W}^{t,(0)},\mathbf{Z}^{t,(0)}\}_{t=1}^{T}$;\\
    Set iteration index $j=1$;\\
    \Repeat{the increment of $\frac{1}{T}\sum_{t=1}^{T} R_{\text{s}}^{t}$ is below $\varepsilon$}
    {
    Update $\mathbf{X}_{\text{p}}^{(j)}$ by \textbf{Algorithm~\ref{alg:element-wise}};\\
    Update $\{\mathbf{W}^{t,(j)},\mathbf{Z}^{t,(j)}\}_{t=1}^{T}$ by solving problem~\eqref{eq:optimization-problem-multiple-wz-2};\\
    $j=j+1$;
    }
    \KwOut{Pinching beamforming $\mathbf{X}_{\text{p}}^{*}$ for the current time block;}
    \textbf{In each CPI of the time block (Small-timescale baseband beamforming):}\\
    \For{$t=1$ \KwTo $T$}
    {
    Update Eve's mobility state $\tilde{\boldsymbol{\xi}}^{t}$ with the received echo signals based on Eq.~\eqref{eq:posterior-update}\;
    Update $\mathbf{W}^{t}$ and $\mathbf{Z}^{t}$ by solving problem~\eqref{eq:optimization-problem-multiple-wz-2}\;
    \KwOut{$\mathbf{W}^{t},\mathbf{Z}^{t}$ for the current CPI.}
    }
\end{algorithm}

\subsection{Overall Two-Timescale Optimization Algorithm Design and Property Analysis}
The proposed two-timescale optimization algorithm for solving problem~\eqref{eq:optimization-problem-multiple} is summarized in \textbf{Algorithm~\ref{alg:AO}}. Specifically, in lines 3-9, the AO method is applied at the beginning of each time block to determine the pinching beamforming. Thereafter, in lines 11-13, while the PA positions remain fixed, the baseband beamforming is updated with the refined Eves' CSI in each CPI.
\subsubsection{Convergence Analysis}
At each AO iteration, the element-wise sequential search for $\mathbf{X}_{\text{p}}$ and the SCA update for $\left\{\mathbf{w}^{t},\mathbf{z}^{t}\right\}$ both yield non-decreasing values of the original objective in~\eqref{eq:optimization-problem-multiple}. Since the sum secrecy rate is limited by the feasible set $\mathcal{S}_x$ and the total transmit power constraint, the proposed SCA and AO algorithms are guaranteed to converge.
\subsubsection{Complexity Analysis}
The complexity of the proposed two-timescale algorithm consists of two parts: the large timescale AO method carried out at the beginning of each time block, and the small timescale baseband beamforming updates performed in each CPI. For the large timescale AO method, on the one hand, the computational complexity of the element-wise pinching beamforming is given by $\mathcal{O}\left(I_{1}NTN_{\text{t}}M_{\text{t}}\right)$, where $I_{1}$ denotes the number of outer iterations in the element-wise method. On the other hand, If the interior point method is employed in the SCA for solving the baseband beamforming problem, the computational complexity is given by $\mathcal{O}\left(I_{2}N_{\text{t}}^{3.5}\right)$, where $I_{2}$ denotes the number of the SCA iterations. As a result, assuming $I_{3}$ iterations in the AO framework, the total computational complexity of the AO method is given by $\mathcal{O}\left(I_{3}\left(NTN_{\text{t}}M_{\text{t}}+I_{2}N_{\text{t}}^{3.5}\right)\right)$. For the small-timescale baseband beamforming, the corresponding complexity is given by $\mathcal{O}\left(TI_{2}N_{\text{t}}^{3.5}\right)$. As such, the overall computational complexity of \textbf{Algorithm~\ref{alg:AO}} is given by $\mathcal{O}\left(I_{3}\left(I_{1}NTN_{\text{t}}M_{\text{t}}+I_{2}N_{\text{t}}^{3.5}\right)+TI_{2}N_{\text{t}}^{3.5}\right)$.

\section{Two-Timescale ISAC-enabled Secure Communications Framework for Single-Waveguide PASS}
In this section, we consider the scenario where the BS is deployed with a single waveguide for signal transmission to Bob, while acquiring the CSI of Eve via echo signals to multiple LCXs.  Since a single waveguide provides only one effective spatial dimension, the resulting spatial channel is rank-one, which fundamentally restricts the transmission of at most one independent data stream. To address this issue, we design the two-timescale ISAC-enabled secure communications framework as shown in Fig.~\ref{fig:frame-single}. Specifically, each time block contains separate communication and sensing phases\footnote{We assume that the time duration occupied by the sensing phase does not exceed one CPI. The time allocation problem for communication and sensing phases is out of the scope of this work, which deserves further study in future works.}. On the one hand, the large-timescale pinching beamforming is carried out at the beginning of the $l$-th time block for the maximization of the average secrecy rate in the communication phase, which is also subject to the minimum echo signal power in the sensing phase. On the other hand, the small-timescale baseband power allocation can be updated across different CPIs.

\begin{figure}[t]
	\centering
	\includegraphics[scale=0.3]{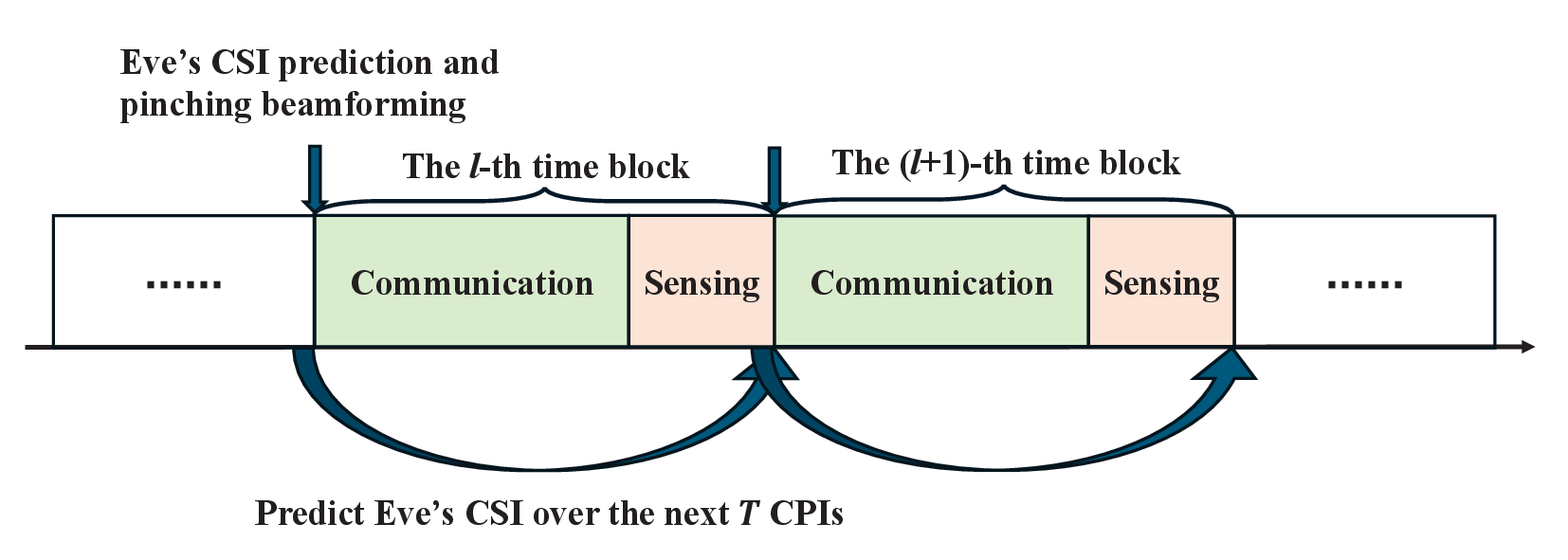}
	\caption{Illustration of the proposed two-timescale ISAC-enabled secure communications framework in single-waveguide PASS.}
	\label{fig:frame-single} 
\end{figure}

\subsection{Signal Model}
Let $s_{\mathrm{p}}$ denote the unit-power probing signal, then the received echo signal on the $q$-th LCX in the $t$-th CPI is given by
\begin{align}
y_{\mathrm{c}}^{q,t}=&\sqrt{p_\mathrm{s}}\sqrt{\beta}\left(\mathbf{v}^{q}\right)^T\left(\mathbf{h}_{\mathrm{e,r}}^{q,t}\left(\mathbf{v}_\mathrm{e}^{t},\boldsymbol{\psi}_\mathrm{e}^{t}\right)\right)^{T}\nonumber\\
&\mathbf{h}_{\mathrm{e,t}}^{t}\left(\mathbf{v}_\mathrm{e}^{t},\boldsymbol{\psi}_\mathrm{e}^{t},\mathbf{x}_\mathrm{p}\right)\mathbf{g}\left(\mathbf{x}_\mathrm{p}\right)s_{\mathrm{p}}+n_\mathrm{c}^{q,t}. 
\end{align}
where $p_\mathrm{s}$ denotes the transmit power during the sensing phase, satisfying $p_\mathrm{s}\leq P_{\max}$. Therefore, the echo signal observed at the BS is given by
\begin{align}
&\mathbf{y}_{\mathrm{c}}^{t}=[y_{\mathrm{c}}^{1,t},...,y_{\mathrm{c}}^{N_{\text{r}},t}]^{T}=\sqrt{p_\mathrm{s}}\sqrt{\beta}\mathbf{V}^{T}\left(\mathbf{h}_{\mathrm{e,r}}^{t}\left(\mathbf{v}_\mathrm{e}^{t},\boldsymbol{\psi}_\mathrm{e}^{t}\right)\right)^{T}\nonumber\\
&\mathbf{h}_{\mathrm{e,t}}^{t}\left(\mathbf{v}_\mathrm{e}^{t},\boldsymbol{\psi}_\mathrm{e}^{t},\mathbf{x}_\mathrm{p}\right)\mathbf{g}\left(\mathbf{x}_\mathrm{p}\right)s_{\mathrm{p}}+\mathbf{n}_{\mathrm{c}}^{t}.
\end{align}
Similar to the EKF scheme proposed in Section~\ref{subsec:EKF}, the Eve's state can be tracked with the echo signals $\mathbf{y}_{\mathrm{c}}^{t}$ in the single-waveguide case, for which the details are omitted here.

The received signal at Bob can be given by
\begin{align}
\label{eq:single-signal-Bob}
y_\text{b}^{t}=\sqrt{p_\mathrm{c}^{t}}\mathbf{h}_{\text{b}}\left(\mathbf{x}_{\text{p}}\right)\mathbf{g}\left(\mathbf{x}_{\text{p}}\right)s+n_\mathrm{b},
\end{align}
where $p_\mathrm{c}^{t}$ denotes the transmit power during the communication phase, satisfying $p_\mathrm{c}^{t}\leq P_{\max}$. Accordingly, the achievable data rate at Bob is given by
\begin{align}
\label{eq:single-rate-Bob}
R_{\text{b}}^{t}\left(\mathbf{x}_{\text{p}},p_\mathrm{c}^{t}\right)=\log_{2}\left(1+\frac{p_\mathrm{c}^{t}\left|\mathbf{h}_{\text{b}}\left(\mathbf{x}_{\text{p}}\right)\mathbf{g}\left(\mathbf{x}_{\text{p}}\right)\right|^2}{\sigma_\mathrm{b}^2}\right).
\end{align}
Similarly, the leakage information rate at Eve is given by
\begin{align}
\label{eq:single-rate-Eve}
R_{\text{e}}^{t}\left(\mathbf{x}_{\text{p}},p_\mathrm{c}^{t}\right)=\log_{2}\left(1+\frac{p_\mathrm{c}^{t}\left|\mathbf{h}_{\text{e,t}}^{t}\left(\mathbf{x}_{\text{p}}\right)\mathbf{g}\left(\mathbf{x}_{\text{p}}\right)\right|^2}{\sigma_\mathrm{e}^2}\right).
\end{align}
Therefore, the secrecy rate in the $t$-th CPI is given by $R_{\text{s}}^{t}=\left[R_{\text{b}}^{t}-R_{\text{e}}^{t}\right]^+$.

\subsection{Problem Formulation}
We aim to maximize the average secrecy rate in each time block, i.e., $T$ CPIs, by jointly optimizing the transmit power $p_\mathrm{s},\{p_\mathrm{c}^{t}\}_{t=1}^{T}$ and the pinching beamforming $\mathbf{x}_{\text{p}}$. The optimization problem can be formulated as 
\begin{subequations}
\label{eq:optimization-problem-single}
\begin{equation}
\label{eq:objective-function-single}
\max_{\mathbf{x}_{\text{p}},p_\mathrm{s},\{p_\mathrm{c}^{t}\}_{t=1}^{T}}\frac{1}{T}\sum_{t=1}^{T} R_{\text{s}}^{t},
\end{equation}
\begin{equation}
\label{eq:maximum-power-single}
{\rm{s.t.}} \ \
p_\mathrm{s}\leq P_{\max},p_\mathrm{c}^{t}\leq P_{\max},\forall t\in\left\{1, \dots, T-1\right\},
\end{equation}
\begin{equation}
\label{eq:echo-signal-power-single}
p(\mathbf{x}_{\text{p}},p_\mathrm{s})\geq\Gamma,
\end{equation}
\begin{equation}
\label{eq:position_constraint1-single}
0\leq x_{\text{p}}^{m}\leq L,\forall m,
\end{equation}
\begin{equation}
\label{eq:position_constraint2-single}
x_{\mathrm{p}}^{m+1}-x_{\mathrm{p}}^{m}\geq\Delta,\forall m.
\end{equation}
\end{subequations}
Constraint~\eqref{eq:maximum-power-single} ensures the BS transmit power does not exceed the threshold $P_{\max}$. In constraint~\eqref{eq:echo-signal-power-single}, $p(\mathbf{x}_{\text{p}})$ denotes the power of the received echo signal, given by 
\begin{align}
\begin{aligned}
\label{eq:echo-power-single}
p(\mathbf{x}_{\text{p}},p_\mathrm{s})\triangleq
p_\mathrm{s}\left(\mathbf{h}_{\mathrm{e}}^{t}\left(\mathbf{v}_{\mathrm{e}}^{t},\boldsymbol{\psi}_{\text{e}}^{t},\mathbf{x}_{\text{p}}\right)\right)^{H}\mathbf{h}_{\mathrm{e}}^{t}\left(\mathbf{v}_{\mathrm{e}}^{t},\boldsymbol{\psi}_{\text{e}}^{t},\mathbf{x}_{\text{p}}\right),
\end{aligned}
\end{align}
where the round-trip channel vector $\mathbf{h}_{\text{e}}^{t}\left(\mathbf{v}_{\mathrm{e}}^{t},\boldsymbol{\psi}_{\text{e}}^{t},\mathbf{x}_\mathrm{p}\right)$ is given by
\begin{align}
&\mathbf{h}_{\text{e}}^{t}\triangleq\sqrt{\beta}\mathbf{V}^{T}\left(\mathbf{h}_{\text{e,r}}^{t}\left(\mathbf{v}_{\mathrm{e}}^{t},\boldsymbol{\psi}_{\text{e}}^{t}\right)\right)^{T}\mathbf{h}_{\text{e,t}}^{t}\left(\mathbf{v}_{\mathrm{e}}^{t},\boldsymbol{\psi}_{\text{e}}^{t},\mathbf{x}_\mathrm{p}\right)\mathbf{g}\left(\mathbf{x}_\mathrm{p}\right).
\end{align}
Note that the optimization variable $p_{\text{s}}$ exists only in constraints~\eqref{eq:maximum-power-single} and~\eqref{eq:echo-signal-power-single}, and is irrelevant to the objective function in~\eqref{eq:objective-function-single}. Therefore, we set $p_\mathrm{s}=P_{\max}$ to yield the largest feasible region for the optimization of $\mathbf{x}_{\text{p}}$.

\subsection{Proposed Solution}
According to Eq.~\eqref{eq:objective-function-multiple}, the secrecy rate in the $t$-th CPI can be further expressed as follows
\begin{align}
R_{\text{s}}^{t}&=\left[\log_{2}\left(\frac{p_{\mathrm{c}}^{t}\left|\mathbf{h}_{\text{b}}\left(\mathbf{x}_{\text{p}}\right)\mathbf{g}\left(\mathbf{x}_{\text{p}}\right)\right|^2+\sigma_\mathrm{b}^2}{p_{\mathrm{c}}^{t}\left|\mathbf{h}_{\text{e,t}}^{t}\left(\mathbf{x}_{\text{p}}\right)\mathbf{g}\left(\mathbf{x}_{\text{p}}\right)\right|^2+\sigma_\mathrm{e}^2}\right)\right]^+.
\end{align}
\begin{lemma}
For a given pinching beamforming vector $\mathbf{x}_{\text{p}}$, if Bob has an effective channel quality advantage over Eve in the $t$-th CPI, i.e.,
\begin{align}
\label{eq:channel-condition}
\frac{\left|\mathbf{h}_{\text{b}}\left(\mathbf{x}_{\text{p}}\right)\mathbf{g}\left(\mathbf{x}_{\text{p}}\right)\right|^2}{\sigma_{\mathrm b}^2} > \frac{\left|\mathbf{h}_{\text{e,t}}^{t}\left(\mathbf{x}_{\text{p}}\right)\mathbf{g}\left(\mathbf{x}_{\text{p}}\right)\right|^2}{\sigma_{\mathrm e}^2},
\end{align}
the secrecy rate $R_{\text{s}}^{t}$ is monotonically increasing with $p_{\mathrm{c}}^{t}$.
\end{lemma}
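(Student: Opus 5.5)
Fix $\mathbf{x}_{\text{p}}$ and the CPI index $t$, and abbreviate the effective channel gains by $a\triangleq\left|\mathbf{h}_{\text{b}}\left(\mathbf{x}_{\text{p}}\right)\mathbf{g}\left(\mathbf{x}_{\text{p}}\right)\right|^2/\sigma_{\mathrm b}^2\geq 0$ and $b\triangleq\left|\mathbf{h}_{\text{e,t}}^{t}\left(\mathbf{x}_{\text{p}}\right)\mathbf{g}\left(\mathbf{x}_{\text{p}}\right)\right|^2/\sigma_{\mathrm e}^2\geq 0$. Condition~\eqref{eq:channel-condition} then reads $a>b$. By~\eqref{eq:single-rate-Bob} and~\eqref{eq:single-rate-Eve}, the quantity inside $[\cdot]^+$ is the scalar function
\begin{align}
f(p)\triangleq\log_2\left(\frac{1+ap}{1+bp}\right),\quad p=p_{\mathrm c}^{t}\geq 0 .
\end{align}
Since $R_{\text{s}}^{t}=[f(p_{\mathrm c}^{t})]^+$, the plan is to show that $f$ is strictly increasing and positive on $p>0$. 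Then the positive part acts as the identity there, and the monotonicity carries over to $R_{\text{s}}^{t}$.

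The key step is differentiation. Directly,
\begin{align}
f'(p)=\frac{1}{\ln 2}\left(\frac{a}{1+ap}-\frac{b}{1+bp}\right)=\frac{a-b}{\ln 2\,(1+ap)(1+bp)}.
\end{align}
For $p\geq 0$ the denominator is strictly positive. Hence $a>b$ gives $f'(p)>0$, so $f$ is strictly increasing on $[0,P_{\max}]$. As an alternative without calculus, we could compare $\frac{1+ap_1}{1+bp_1}<\frac{1+ap_2}{1+bp_2}$ for $p_1<p_2$. Cross-multiplying reduces this to $(a-b)(p_2-p_1)>0$, and monotonicity of $\log_2$ finishes it.

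It remains to handle the operator $[\cdot]^+$. We have $f(0)=0$ and $f$ is strictly increasing, so $f(p)>0$ for every $p>0$. Thus $R_{\text{s}}^{t}=f(p_{\mathrm c}^{t})$ on $(0,P_{\max}]$, and $R_{\text{s}}^{t}$ is increasing in $p_{\mathrm c}^{t}$; including $p_{\mathrm c}^{t}=0$, it is nondecreasing on $[0,P_{\max}]$ and strictly increasing on $(0,P_{\max}]$.

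There is no real obstacle; the only care needed is this bookkeeping with $[\cdot]^+$. We will also remark on the converse case $a\leq b$: there $f$ is nonincreasing with $f(0)=0$, so $R_{\text{s}}^{t}=0$ for all $p_{\mathrm c}^{t}$. Together the two cases justify the small-timescale power allocation rule: set $p_{\mathrm c}^{t}=P_{\max}$ when~\eqref{eq:channel-condition} holds and $p_{\mathrm c}^{t}=0$ otherwise.
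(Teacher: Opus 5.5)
Your proof is correct and takes essentially the same route as the paper: both differentiate the difference of the two log-rates with respect to $p_{\mathrm c}^{t}$, and both reduce the sign of the derivative to $a_{\text{b}}\sigma_{\mathrm e}^{2}-a_{\text{e}}^{t}\sigma_{\mathrm b}^{2}>0$, which is your $a-b>0$ after noise normalization. Your handling of $[\cdot]^+$ is slightly more careful than the paper's: you derive positivity of the bracketed term for $p_{\mathrm c}^{t}>0$ from $f(0)=0$ and strict monotonicity, whereas the paper simply asserts it.
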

\begin{proof}
For notational simplicity, define $a_{\text{b}}=\left|\mathbf{h}_{\text{b}}\left(\mathbf{x}_{\text{p}}\right)\mathbf{g}\left(\mathbf{x}_{\text{p}}\right)\right|^2$ and $a_{\text{e}}^{t}=\left|\mathbf{h}_{\text{e,t}}^{t}\left(\mathbf{x}_{\text{p}}\right)\mathbf{g}\left(\mathbf{x}_{\text{p}}\right)\right|^2$. Then, the secrecy rate in the $t$-th CPI can be rewritten as
\begin{align}
R_{\text{s}}^{t}\left(p_{\mathrm{c}}^{t}\right) = \left[ \log_2\left(1+\frac{p_{\mathrm{c}}^{t}a_{\text{b}}} {\sigma_{\mathrm b}^{2}}\right) - \log_2\left(1+\frac{p_{\mathrm{c}}^{t}a_{\text{e}}^{t}} {\sigma_{\mathrm e}^{2}}\right) \right]^+.
\end{align}
When $\frac{a_{\text{b}}} {\sigma_{\mathrm b}^{2}} > \frac{a_{\text{e}}^{t}} {\sigma_{\mathrm e}^{2}}$, the term inside $[\cdot]^+$ is positive for $p_{\mathrm{c}}^{t}>0$. Thus, the derivative of $R_{\text{s}}^{t}$ with respect to $p_{\mathrm{c}}^{t}$ is given by
\begin{align}
\frac{\partial R_{\mathrm s}^{t}}{\partial p_{\mathrm{c}}^{t}} &= \frac{1}{\ln 2} \left( \frac{a_{\text{b}}}{\sigma_{\mathrm b}^{2}+p_{\mathrm{c}}^{t}a_{\text{b}}} - \frac{a_{\text{e}}^{t}}{\sigma_{\mathrm e}^{2}+p_{\mathrm{c}}^{t}a_{\text{e}}^{t}}\right)\nonumber\\
&=\frac{1}{\ln 2}\frac{a_{\text{b}}\sigma_{\mathrm e}^{2}-a_{\text{e}}^{t}\sigma_{\mathrm b}^{2}}{\left(\sigma_{\mathrm b}^{2}+p_{\mathrm{c}}^{t}a_{\text{b}}\right)\left(\sigma_{\mathrm e}^{2}+p_{\mathrm{c}}^{t}a_{\text{e}}^{t}\right)}. 
\end{align}
Since $\frac{a_{\text{b}}} {\sigma_{\mathrm b}^{2}} > \frac{a_{\text{e}}^{t}} {\sigma_{\mathrm e}^{2}}$, we have $a_{\text{b}}\sigma_{\mathrm e}^{2}-a_{\text{e}}^{t}\sigma_{\mathrm b}^{2} > 0$. Therefore, the secrecy rate $R_{\text{s}}^{t}$ is monotonically increasing with $p_{\mathrm{c}}^{t}$.
\end{proof}

\begin{algorithm}[tp]
	\caption{Proposed Two-Timescale Optimization Algorithm for Solving Problem~\eqref{eq:optimization-problem-single}}
        \label{alg:single}
    \LinesNumbered
    \KwIn{Bob's location $\boldsymbol{\psi}_{\text{b}}$}
    \textbf{At the beginning of each time block (Large-timescale pinching beamforming):}\\
    Predict Eve’s mobility states over the next $T$ CPIs $\{\hat{\boldsymbol{\xi}}^{t}\}_{t=1}^{T}$ with the prediction model given in~\eqref{eq:prior-prediction};\\
    Set $p_\mathrm{s}=P_{\max}$ and $p_\mathrm{c}^{t}=P_{\max},\forall t\in\left\{1, \dots, T-1\right\}$;\\
	Initialize the pinching beamforming $\mathbf{x}_{\text{p}}$;\\
    Update $\mathbf{x}_{\text{p}}$ by the element-wise one-dimensional searching algorithm;\\
    \KwOut{Pinching beamforming $\mathbf{x}_{\text{p}}^{*}$ for the current time block;}
    \textbf{In each CPI of the time block (Small-timescale power allocation):}\\
    \For{$t=1$ \KwTo $T-1$}
    {
    \If{inequality~\eqref{eq:channel-condition} holds}
    {
    Set $p_{\mathrm{c}}^{t}=P_{\max}$\;
    }
    \Else
    {
    Set $p_{\mathrm{c}}^{t}=0$\;
    }
    }
    \KwOut{$p_{\mathrm{c}}^{t}$ for each CPI.}
\end{algorithm}

Note that, in the large-timescale pinching beamforming stage, we can make the inequality in~\eqref{eq:channel-condition}, that is derived based on the predicted Eve's positions, holds by properly adjusting $\mathbf{x}_{\text{p}}$. Therefore, by setting $p_\mathrm{c}^{t}=P_{\max},\forall t$ at the pinching beamforming stage, we can formulate  the optimization subproblem with respect to $x_{\text{p}}^{m}$ as follows:
\begin{subequations}
\label{eq:optimization-problem-single-X}
\begin{equation}
\label{eq:objective-function-single-X}
\max_{x_{\text{p}}^{m}}\frac{1}{T}\sum_{t=1}^{T} \log_{2}\left(\frac{P_{\max}\left|A_{\text{b}}^{t}\left(x_{\text{p}}^{m}\right)\right|^2+M_{\text{t}}\sigma_\mathrm{b}^2}{P_{\max}\left|A_{\text{e}}^{t}\left(x_{\text{p}}^{m}\right)\right|^2+M_{\text{t}}\sigma_\mathrm{e}^2}\right),
\end{equation}
\begin{equation}
\label{eq:echo-signal-power-single-X}
{\rm{s.t.}} \ \
p\left(x_{\text{p}}^{m}\right)\geq\Gamma,
\end{equation}
\begin{equation}
\label{eq:position_constraint-single-X}
\eqref{eq:position_constraint1-single},\eqref{eq:position_constraint2-single},
\notag
\end{equation}
\end{subequations}
where $A_{\text{b}}^{t}\left(x_{\text{p}}^{m}\right)$ and $A_{\text{e}}^{t}\left(x_{\text{p}}^{m}\right)$ represent the effective signal terms that depend on $x_{\text{p}}^{m}$, given by
\begin{subequations}
\begin{equation}
A_{\text{b}}^{t}\left(x_{\text{p}}^{m}\right)=\frac{\eta e^{-j\theta^{m}}}{\left\|\psi_{\text{b}}-\psi_{\mathrm{p}}^{m}\right\|}+\sum_{q\neq m}^{M_{\text{t}}}\frac{\eta e^{-j\theta^{q}}}{\left\|\psi_{\text{b}}-\psi_{\mathrm{p}}^{q}\right\|},
\end{equation}
\begin{equation}
A_{\text{e}}^t\left(x_{\text{p}}^{m}\right)=\frac{\eta e^{-j\phi^{t,m}}}{\left\|\psi_{\text{e}}^t-\psi_{\mathrm{p}}^{m}\right\|}+\sum_{q\neq m}^{M_{\text{t}}}\frac{\eta e^{-j\phi^{t,q}}}{\left\|\psi_{\text{e}}^t-\psi_{\mathrm{p}}^{q}\right\|},
\end{equation}
\end{subequations}
with $\theta^{m}=\frac{2\pi}{\lambda}\left\|\boldsymbol{\psi}_{\text{b}}-\boldsymbol{\psi}^{m}_{\text{p}}\right\|+\frac{2\pi }{\lambda_{\text{g}}}x_{\text{p}}^{m}$ and $\phi^{t,m}=\frac{2\pi}{\lambda}\left(\left\|\boldsymbol{\psi}_{\text{e}}^{t}-\boldsymbol{\psi}^{m}_{\text{p}}\right\|+\Delta Tv^{t,m}\right)+\frac{2\pi }{\lambda_{\text{g}}}x_{\text{p}}^{m}$. Again, we employ the element-wise method for the pinching beamforming optimization. It can be observed that problem~\eqref{eq:optimization-problem-single-X} is a single-variable optimization problem under the continuous deployment and spacing constraints in~\eqref{eq:position_constraint1-single} and~\eqref{eq:position_constraint2-single}, which can be solved by the one-dimensional search over the feasible set $\mathcal{S}_{x}$ given in Eq.~\eqref{eq:feasible-set}.

Note that, due to the prediction error of Eve's mobility states in the large-timescale pinching beamforming stage, the condition in~\eqref{eq:channel-condition} may not hold with the refined Eve's CSI in each CPI. Therefore, we need to check again whether the condition~\eqref{eq:channel-condition} is valid for each CPI. If yes, we set $p_{\mathrm{c}}^{t}=P_{\max}$. Otherwise, we set $p_{\mathrm{c}}^{t}=0$, given that no positive secrecy rate can be guaranteed for confidential information transmission in the current CPI. The complete algorithm for solving the original problem~\eqref{eq:optimization-problem-single} is summarized in \textbf{Algorithm~\ref{alg:single}}. Specifically, the pinching beamforming is optimized only at the beginning of each time block. The computational complexity of \textbf{Algorithm~\ref{alg:single}} is given by $\mathcal{O}\left(I_{1}NTN_{\text{t}}M_{\text{t}}\right)$.


\section{Simulation Results}
In this section, numerical results are provided to verify the
effectiveness of the proposed two-timescale ISAC-enabled secure communications in PASS, for both multiple- and single-waveguide scenarios.
\vspace{-4mm}
\subsection{Simulation Setup}
Unless otherwise specified, the carrier frequency, the effective refractive index of the waveguide, the maximum transmit power at the BS, and the noise powers at Bob and Eve are set to $f_{\mathfrak{c}}=6~\mathrm{GHz}$, $n_{\mathrm{eff}}=1.4$, $P_{\mathrm{max}}=40~\mathrm{dBm}$ and $\sigma_\mathrm{b}^2=\sigma_\mathrm{e}^2=-90~\mathrm{dBm}$, respectively. For the multiple-waveguide scenario, the BS is deployed with $N_{\text{t}}=2$ waveguides, each connected to a dedicated RF chain and equipped with $M_{\text{t}}=8$ PAs. For echo reception, the BS employs $N_{\text{r}}=2$ LCXs, whose slots are placed at a uniform interval of $1~\mathrm{m}$. For the geometric deployment, the waveguides and LCXs, each with length $L = 10~\mathrm{m}$, are arranged in parallel at the height of $d = 3~\mathrm{m}$. The inter-waveguide spacing and inter-LCX spacing are both fixed at $5~\mathrm{m}$, and each waveguide is separated from its neighboring LCX by $0.5~\mathrm{m}$. Moreover, we set the CPI duration as $\Delta T=100~\mathrm{ms}$ and adopt a constant RCS of Eve as $\beta = 1$~\cite{9246715}. The sensing threshold is set as $\Gamma=-70~\mathrm{dBm}$. The PAs positions are updated once every four CPIs, i.e., $T=4$. The position of Bob and the initial position of Eve are set to $[4,0,0]~\mathrm{m}$ and $[2,2,0]~\mathrm{m}$,  respectively. The initial velocities of Eve are set to $v_x=3~\mathrm{m/s}$ and $v_y=3\mathrm{~m/s}$ with velocity variance $\sigma_{v_x}^2=0.01~\mathrm{(m/s)^2}$ and $\sigma_{v_y}^2=0.02~\mathrm{(m/s)^2}$, respectively. 
\vspace{-4mm}
\subsection{Multiple-Waveguide Scenario}
We first present simulation results in the multiple-waveguides scenario. To demonstrate the effectiveness of the proposed scheme, we consider the following baselines for performance comparisons. 
\begin{itemize}
    \item \textbf{MIMO (Fully-digital beamforming)}: In this benchmark, the BS employs a uniform linear array centered at $[0,0,d]$ and aligned with the $x$-axis. The array is composed of $N_{\text{t}}$ antennas separated by $\lambda/2$, each connected to a dedicated RF chain, such that fully digital beamforming is available at the BS. The resultant average secrecy rate maximization problem can be solved by applying the method detailed in Section~\ref{subsec:baseband-beamforming}.
    \item \textbf{Massive MIMO (Hybrid beamforming)}: In this benchmark, a uniform linear array is deployed with its center located at $[0,0,d]$. The BS is equipped with $N_{\text{t}}M_{\text{t}}$ antennas connected to $N_{\text{t}}$ RF chains through a sub-connected structure, i.e., the hybrid beamforming is enabled at the BS.  The resultant secure communication problem is solved by applying the algorithm in~\cite{7397861}.
\end{itemize}

\begin{figure}
    \centering
    \begin{subfigure}{\linewidth}
        \centering
        \includegraphics[scale=0.5]{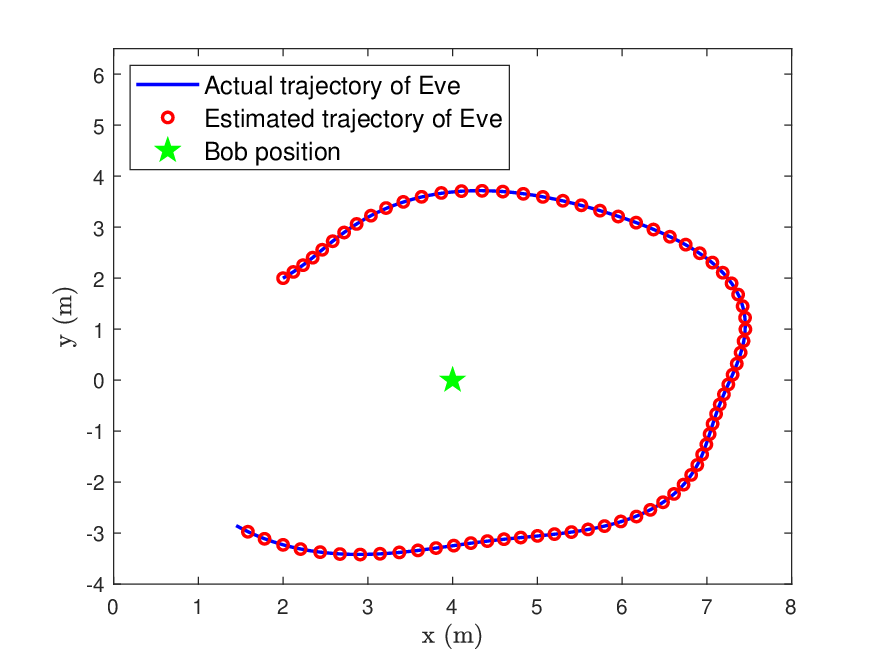}
        \caption{Tracking results of Eve’s trajectory.}
        \label{fig:trajectory-multi}
    \end{subfigure}
    \begin{subfigure}{\linewidth}
        \centering
        \includegraphics[scale=0.5]{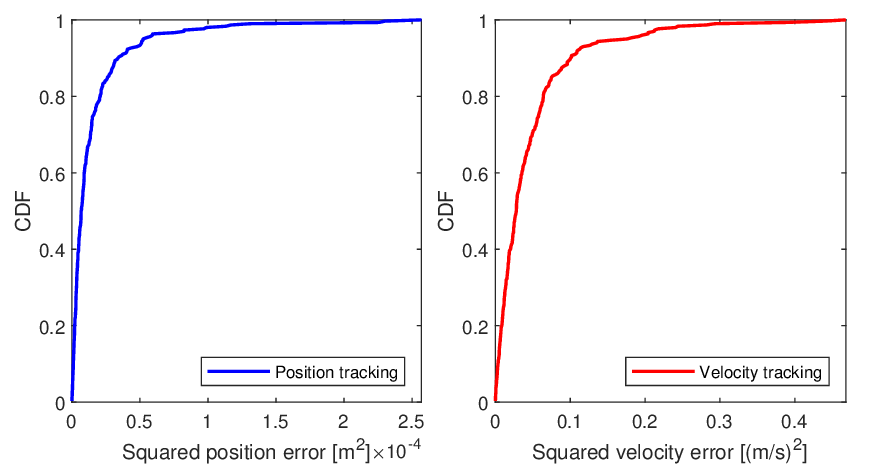}
        \caption{CDF of the squared position and velocity tracking errors.}
        \label{fig:cdf-multi}
    \end{subfigure}
    \caption{Illustration of the Eve's states tracking results in the multiple-waveguide scenario.}
    \label{fig:tracking-multi}
\end{figure}

Fig.~\ref{fig:tracking-multi} demonstrates the Eve's state tracking performance of the proposed framework. It can be observed from Fig.~3(a) that, the estimated trajectory of Eve perfectly coincides with the ground truth, which underscores the superiority of the considered PASS in locations and velocities sensing. Notably, the enlarged aperture of LCXs extends the near-field region, thereby enabling full-dimensional tracking by exploiting spherical waves. To provide a more comprehensive evaluation of the mobility state tracking performance, Fig.~3(b) further presents the empirical cumulative distribution function (CDF) of the squared position and velocity tracking errors. As can be observed, the order of magnitude of the position tracking error is much smaller than that of the velocity tracking error. This is because, the velocities variances $\Delta v_{x}$ and $\Delta v_{y}$ can not be captured in the prior prediction model, which restricts the subsequent data-fusion performance. However, this performance loss is alleviated for the position estimation, as the velocities variances give less impact on the prior position prediction under a small time period of one CPI. 

\begin{figure}[t]
	\centering
	\includegraphics[scale=0.58]{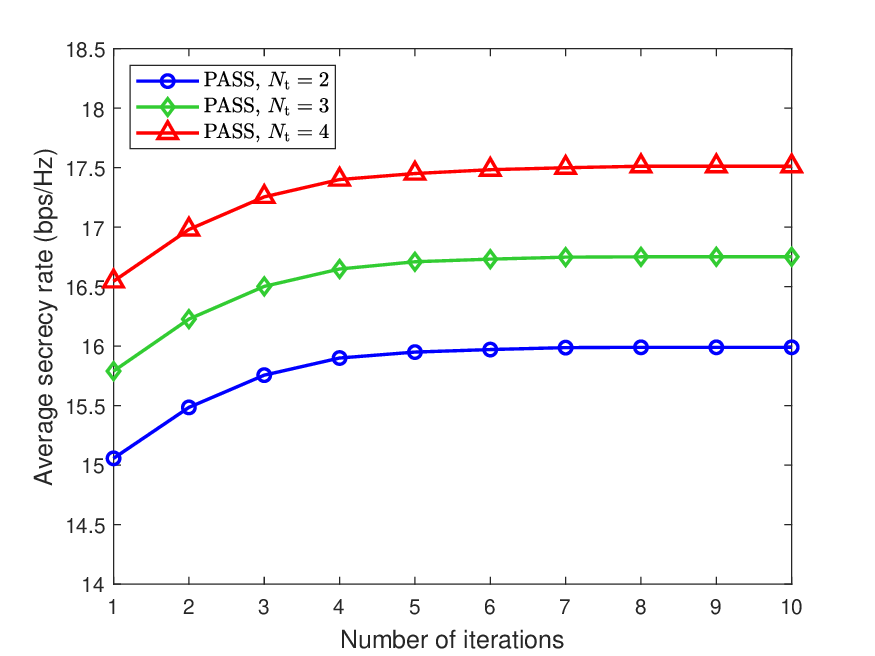}
	\caption{Convergence behavior of the proposed AO algorithm for jointly optimizing pinching and baseband beamforming.}
	\label{fig:rate-iter-multi} 
\end{figure}

In Fig.~\ref{fig:rate-iter-multi}, we present the convergence behavior of the proposed AO algorithm for jointly optimizing the pinching and baseband beamforming under different waveguide numbers $N_{\text{t}}$. As observed, the average secrecy rate improves significantly in the first few iterations and stabilizes after about five iterations. This convergence behavior suggests that the proposed two-timescale optimization algorithm can be implemented with relatively low iterative overhead in practical systems. It can be further observed that increasing the number of waveguides leads to a higher average secrecy rate. This is because, on the one hand, more waveguides bring in higher channel gain for sending the AN to Eve, so as to improve the subsequent beamforming effectiveness with higher Eve's state tracking accuracy. On the other hand, more waveguides also indicate improved spatial degree of freedoms (DoFs) for steering the communication signal towards Bob and reducing the information leakage to Eve.  

\begin{figure}[t]
	\centering
	\includegraphics[scale=0.58]{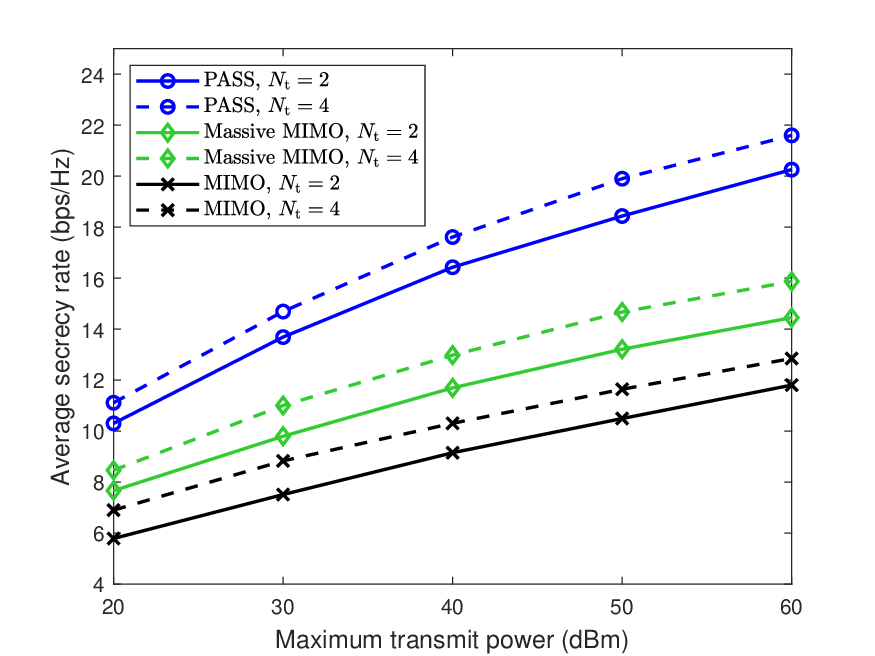}
	\caption{Average secrecy rate versus BS maximum transmit power.}
	\label{fig:rate-p-multi}
\end{figure}

Fig.~\ref{fig:rate-p-multi} depicts the average secrecy data rate versus the BS maximum transmit power $P_{\mathrm{max}}$, where we set $N_{\text{t}}=\{2,4\}$. As expected, the average secrecy rate increases monotonically with the transmit power across all considered schemes. Notably, the considered PASS significantly outperforms the conventional MIMO and massive MIMO systems. This is expected, as PASS can provide substantial performance gains by effectively mitigating large-scale path loss. For instance, when $N_{\text{t}}=2$ and $P_{\mathrm{max}}=20~\mathrm{dBm}$, the average secrecy rate obtained by PASS is about $10.3~\mathrm{bps/Hz}$, which is notably higher than those of the conventional MIMO and massive MIMO, i.e., $5.8~\mathrm{bps/Hz}$ and $7.6~\mathrm{bps/Hz}$, respectively.

\begin{figure}[t]
	\centering
	\includegraphics[scale=0.58]{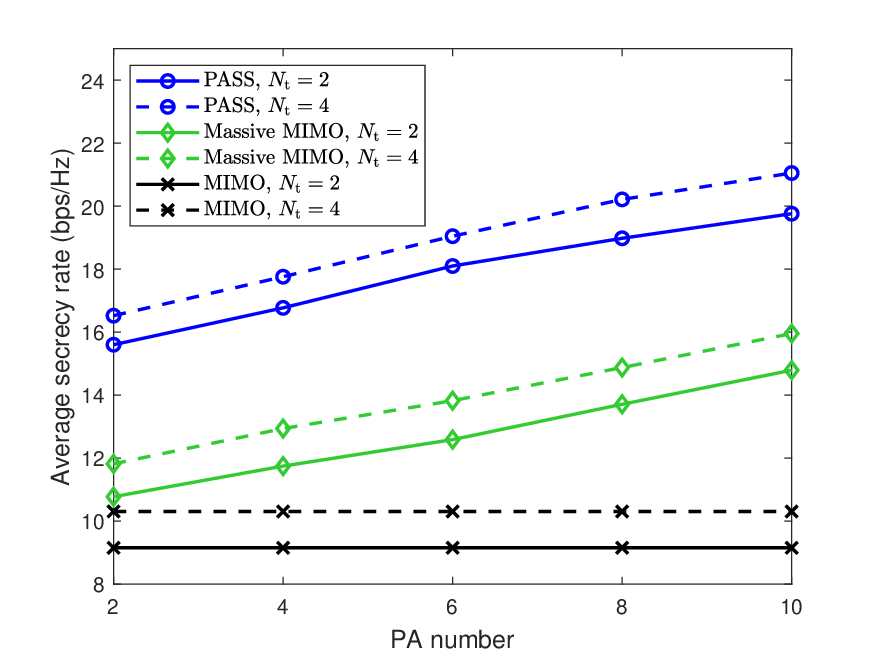}
	\caption{Average secrecy rate versus the number of PAs.}
	\label{fig:rate-pa-multi} 
\end{figure}

Fig.~\ref{fig:rate-pa-multi} illustrates the impact of the number of PAs per waveguide $M_{\text{t}}$ on the average secrecy rate under $P_{\mathrm{max}}=40~\mathrm{dBm}$. As shown in the figure, the average secrecy rate improves as $M_{\text{t}}$ increases. For instance, as the number of PAs on each waveguide grows from 2 to 10, the average secrecy rate of the PASS increases by around $26~\%$. The reason is that, increasing $M_{\text{t}}$ introduces additional spatial DoFs, which can be exploited to enhance the secrecy performance. It is also worthy noting that, the hybrid MIMO generally requires a larger number of phase shifters for analog beamforming with the increment of $M_{\text{t}}$. In contrast, the secrecy performance improvement achieved by PASS is realized under a low-cost deployment architecture, underscoring its economic practicality.

\vspace{-2mm}
\subsection{Single-Waveguide Scenario}
For the single-waveguide scenario, we consider the following benchmarks for demonstrating the effectiveness of our proposed scheme. 
\begin{itemize}
    \item \textbf{Random position}: In this benchmark, PAs are placed in random positions along the waveguide, given that the minimum spacing of $\lambda/2$, is satisfied. The final secrecy performance is evaluated by averaging over 500 independent realizations.
    \item \textbf{Conventional MIMO (Analog beamforming)}: In this benchmark, the BS is equipped with $N_{\text{t}}$ antennas, while only one RF chain is activated. The antennas are centered at $[0,0,d]$ and uniformly arranged along the $x$-axis with half-wavelength spacing. The corresponding analog beamforming is designed based on the maximum ratio transmission (MRT).
\end{itemize}

\begin{figure}
    \centering
    \begin{subfigure}{\linewidth}
        \centering
        \includegraphics[scale=0.5]{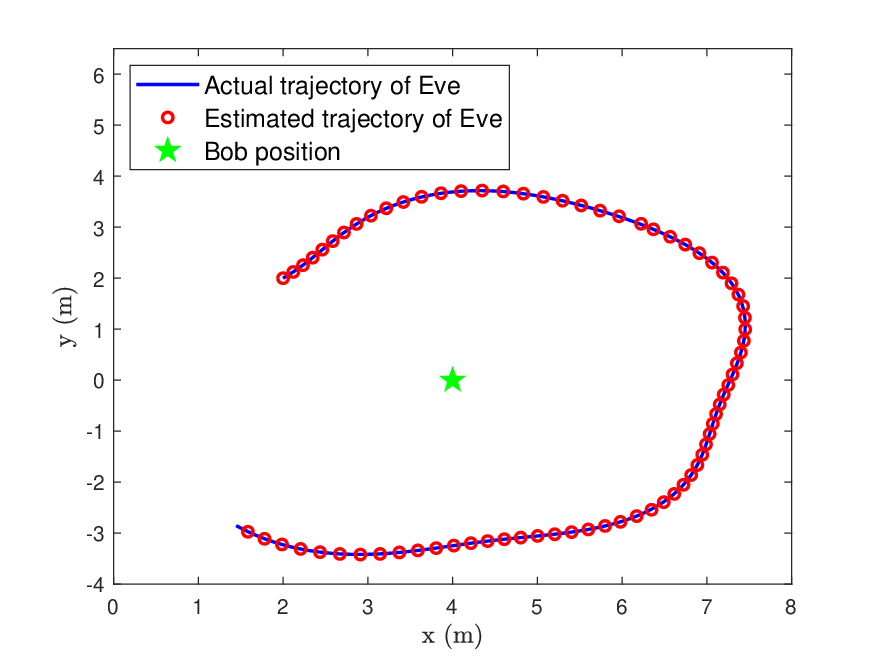}
        \caption{Tracking results on Eve’s trajectory.}
        \label{fig:trajectory-single}
    \end{subfigure}
    \begin{subfigure}{\linewidth}
        \centering
        \includegraphics[scale=0.5]{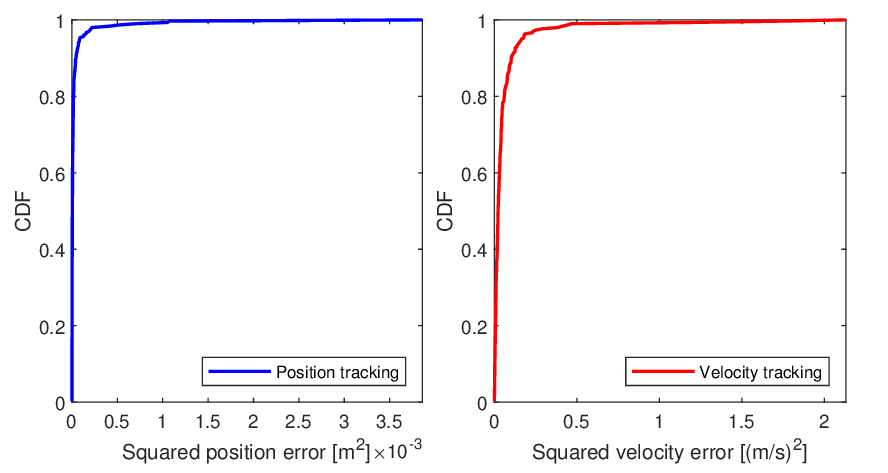}
        \caption{CDF of the squared position and velocity tracking errors.}
        \label{fig:cdf-single}
    \end{subfigure}
    \caption{Illustration of the Eve’s states tracking results in the single-waveguide scenario.}
    \label{fig:tracking-single}
\end{figure}

Fig.~\ref{fig:tracking-single} demonstrates the Eve's states tracking results, with $P_{\mathrm{max}}=40~\mathrm{dBm}$ and $M_{\text{t}}=8$. As shown in Fig.~7(a), the proposed EKF-based method can also provide accurate position tracking for Eve in the single-waveguide scenario. Fig.~7(b) illustrates the empirical CDF of the squared position and velocity tracking errors. As can be seen from the figure, the position tracking error remains on the order of $10^{-3}$, while the velocity tracking error is around the $10^{-1}$ level, validating the effectiveness of the proposed EKF-based tracking method in the single-waveguide scenario.

\begin{figure}[t]
	\centering
	\includegraphics[scale=0.58]{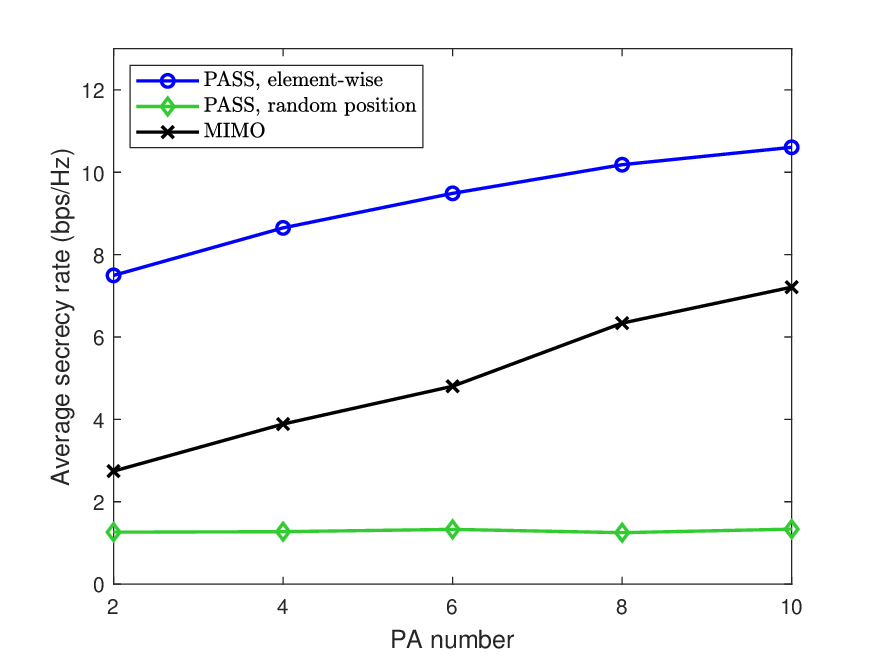}
	\caption{Average secrecy rate versus the number of PAs.}
	\label{fig:rate-pa-single} 
\end{figure}


Fig.~\ref{fig:rate-pa-single} illustrates the average secrecy rate versus the number of PAs. As can be observed, except
for the random position scheme, all methods achieve improved average secrecy rate performance as $M_{\text{t}}$ increases. For the random-position scheme, increasing the number of PAs does not necessarily yield a notable improvement in the secrecy rate, since their random spatial deployment gives rise to unpredictable channel conditions. In contrast, the other schemes exploit the increase in $M_{\text{t}}$ to obtain more spatial DoFs, which further enlarge the channel quality differences and thus improve the average secrecy rate. These observations highlight the importance of optimizing PAs positions in PASS. Moreover, it is also worth noting that, the PASS significantly outperforms the conventional MIMO system with analog beamforming. This is expected, as PASS can significantly improve the channel quality difference between  Bob and Eve by flexibly moving PAs. 

\section{Conclusions}
In this paper, a novel two-timescale optimization framework was proposed for PASS-enabled secure ISAC, where the BS employed PAs and LCXs for the signal transmission and the echo signal reception, respectively. The moving-Eve's mobility states were tracked by employing the EKF method that fused the predicted Eve’s states and the measured ones. Considering the limited PAs positions reconfiguration overhead, the pinching beamforming and baseband processing were executed in the large and small timescales, respectively. Specifically, we first considered the multiple-waveguide scenario, where the BS was enabled to transmit AN together with communication signals. We formulated a joint pinching and baseband beamforming problem with the aim of maximizing the average secrecy rate. To solve the resultant problem across different timescales, the AO algorithm was first applied in the large timescale for determining the pinching beamforming with the predicted Eve’s mobility states. Afterwards, the baseband beamforming was continuously updated with the refined Eve’s states obtained by the real-time sensing. We then considered the single-waveguide scenario. Since at most one data stream could be carried, an ISAC frame structure containing separate communication and sensing phases was designed. The large-timescale pinching beamforming was optimized in an element-wise manner, while the small-timescale power allocation was adaptively determined according to the effective channel quality comparison between Bob and Eve. Simulation results demonstrated the effectiveness of the proposed two-timescale secure ISAC framework in terms of Eve’s states tracking accuracy as well as the secrecy rate enhancement.

\bibliographystyle{IEEEtran}
\bibliography{mybib}
\end{document}